\newtheorem{remark}{Remark}
\newtheorem{assumption}{Assumption}
\newtheorem{definition}{Definition}
\newtheorem{proposition}{Proposition}
\newtheorem{theorem}{Theorem}
\newtheorem{lemma}{Lemma}
\newtheorem{problem}{Problem}
\newenvironment{proof}{{\indent \indent \it Proof:}}{\hfill $\blacksquare$\par}
\def\BibTeX{{\rm B\kern-.05em{\sc i\kern-.025em b}\kern-.08em
    T\kern-.1667em\lower.7ex\hbox{E}\kern-.125emX}}
\begin{document}
\title{Propensity Formation-Containment Control of Fully Heterogeneous Multi-Agent Systems via Online Data-Driven Learning}

\author{Ao Cao, Fuyong Wang, and Zhongxin Liu, \IEEEmembership{Member, IEEE}
	\thanks{This work was supported by the National Natural Science Foundation of China under Grant 62103203. (Corresponding author: Fuyong Wang.)}
	\thanks{Ao Cao, Fuyong Wang and Zhongxin Liu  are with the College of Artificial Intelligence, Nankai	University, Tianjin 300350, China (e-mail: caoao@mail.nankai.edu.cn;  wangfy@nankai.edu.cn; lzhx@nankai.edu.cn).}}
\maketitle

\begin{abstract}
This paper introduces an online data-driven learning scheme designed to address a novel problem in propensity formation and containment control for fully heterogeneous multi-agent systems. Unlike traditional approaches that rely on the eigenvalues of the Laplacian matrix, this problem considers the determination of follower positions based on propensity factors released by leaders. To address the challenge of incomplete utilization of leader information in existing multi-leader control methods, the concept of an influential transit formation leader (ITFL) is introduced. An adaptive observer is developed for the agents, including the ITFL, to estimate the state of the tracking leader or the leader's formation. Building on these observations, a model-based control protocol is proposed, elucidating the relationship between the regulation equations and control gains, ensuring the asymptotic convergence of the agent's state. To eliminate the necessity for model information throughout the control process, a new online data-driven learning algorithm is devised for the control protocol. Finally, numerical simulation results are given to verify the effectiveness of the proposed method.
\end{abstract}
	\begin{IEEEkeywords}
	Formation-containment control, Data-driven learning, Propensity factors, Fully heterogeneous multi-agent systems,  Adaptive observer 
\end{IEEEkeywords}
\section{Introduction}
 The rapid progress in communication, perception, and computer technologies has catalyzed the evolution of human-engineered and composite systems towards largescale and intelligence. These extensive systems typically comprise a multitude of interconnected intelligent subsystems working collaboratively to achieve predefined objectives \cite{wooldridge2009introduction,baillieul2007control,cao2012overview,9842347}. Presently, the formation-containment control (FCC) problem within multi-agent systems (MAS) has gained significant prominence in the realm of distributed control systems. FCC is concerned with assisting leaders in forming desired configurations while guiding followers to enter the convex hull delineated by the leaders. Given its specific control objectives, FCC finds a wide array of applications in practical systems, including unmanned airship systems \cite{yu2021distributed}, unmanned air-ground vehicle systems \cite{cheng2023data}, rigid spacecraft systems \cite{cui2020truly}, and unmanned aerial vehicle systems \cite{ouyang2021neural}.

Significant theoretical progress has been achieved in addressing the FCC problem for homogeneous systems, where both the agent dynamics and the formation dynamics of leaders are identical, as evidenced by several studies \cite{dong2015formation,9409728,9870036}. In recent years, attention has also turned to the semi-heterogeneous FCC problem, which deals with systems featuring heterogeneous agent dynamics and homogeneous formation dynamics. These research outcomes span various categories, ranging from undirected \cite{WANG201826} to directed topologies \cite{8384027}, from distributed \cite{WANG2017392} to fully distributed solutions \cite{8579593}, and from fixed \cite{GAO2021146} to switched topologies \cite{cheng2023data}. Many of these investigations rely on output regulation equations \cite{CAI2017299,6026912,DENG201962} and assume that leaders have identical formation dynamics. However, practical applications often exhibit diverse requirements, and certain scenarios demand heterogeneous formation dynamics to effectively tackle complex tasks. For instance, collaborative search and rescue missions involving unmanned air-ground vehicles after a disaster \cite{7330001}. Despite its significance, there remains a noticeable research gap when it comes to fully heterogeneous FCC problems that involve distinct formation dynamics and agent dynamics.

Typically, achieving FCC in heterogeneous systems necessitates the development of distributed observers for followers, enabling them to estimate convex combinations of multiple leaders \cite{HAGHSHENAS2015210,8277155,8869852,9758951}. This becomes especially challenging when the leader systems exhibit heterogeneity. Recent work in \cite{9627528} and \cite{9802518} introduced adaptive distributed observers for the containment control (CC) problem \cite{10336935} in directed and undirected graphs, respectively.  These approaches relaxed the constraint that leader dynamics must be identical. However, they come with an implicit requirement regarding the communication graph: all formation leaders must communicate directly with at least one follower. These methods become inapplicable when a formation leader needs to transmit information through an influential transit formation leader (ITFL), where the followers do not utilize the information from this leader.  Additionally, it's worth noting that in nearly all existing results, the convergence position of the followers is influenced by the eigenvalues of the Laplacian matrix. This dependency is neither reasonable nor practical in real-world operational environments.

It is important to highlight that the results discussed regarding FCC and CC in heterogeneous systems rely on precise model knowledge to solve the output regulation equations. However, when acquiring accurate models through explicit modeling becomes a challenge, the design of controllers using data-driven methods that exclusively rely on measurement data becomes necessary. Considerable advancements have been made in the development of data-driven algorithms across various domains, including optimal control \cite{8703172,10061542,9304046}, safety control \cite{10101826,8970529}, predictive control \cite{8039204,9109670,BRESCHI2023110961} and robust control \cite{KARIMI2017227,8957584}, with many of these techniques drawing inspiration from reinforcement learning (RL) and behavioral theory. Despite the rapid progress of data-driven control methods in recent years, the cooperative control problem within fully heterogeneous multi-agent systems, relying solely on measurement data, remains an active area of research that has yet to be fully addressed.

The primary objective of this article is to develop formation-containment controllers tailored for fully heterogeneous MAS by exclusively leveraging measurement data. The aim is to tackle two critical challenges: the underutilization of leader information in specific topological scenarios and the reliance of follower convergence positions on topology. To address these challenges, this paper introduces the concept of Propensity Formation-Containment Control (PFCC) for fully heterogeneous MAS. The approach assumes that formation leaders generate propensity factors to steer the motion of the followers, which encompasses the development of a data-based adaptive observer tailored for pertinent agents, notably the Influential Transit Formation Leader (ITFL), with the purpose of gathering crucial information. Building upon this observer, the paper develops a model-based PFCC controller and a corresponding online data-driven learning algorithm. The main contributions of this paper can be summarized as follows:

1) A novel Propensity Formation-Containment Control (PFCC) problem is introduced for fully heterogeneous Multi-Agent Systems (MAS) within directed graphs. In contrast to prior Formation-Containment Control (FCC) findings \cite{cheng2023data,dong2015formation,9409728,9870036,WANG201826,8384027,WANG2017392,8579593,GAO2021146}, this problem allows for nonidentical formation dynamics among leaders, and the convergence positions of followers in this context are determined by the propensity factors released by the leaders rather than relying on the eigenvalues of the Laplacian matrix.

2) We have designed a distributed adaptive observer reliant on measurement data for each agent. Notably, this observer offers distinct advantages over the findings in \cite{HAGHSHENAS2015210,8277155,8869852,9627528,9802518,WANG201826,8384027,WANG2017392,8579593,GAO2021146,cheng2023data,9758951} as it doesn't necessitate any dynamic knowledge and effectively addresses the issue of underutilizing leader information in specific communication topologies.

3) We have developed a novel online data-driven learning scheme for addressing the PFCC problem within fully heterogeneous dynamics. This  scheme guarantees both asymptotic convergence and stability. Moreover, it is versatile, applicable not only to under-actuated and fully-actuated systems but also to over-actuated systems.

The rest of the paper is organized as follows. In Section \ref{S2}, we provide an introduction to the necessary background and delve into the PFCC problem within a directed graph. Section \ref{S3} is dedicated to the design of data-based adaptive observers for agents to facilitate the acquisition of valid information. We also present the design of both model-based and data-driven learning methods for the PFCC problem. In Section \ref{S4}, we offer an illustrative example to validate the efficacy of the proposed algorithm. Finally, Section \ref{S5} offers a summary of the paper.

\section{Preliminaries}\label{S2}
\subsection{Notation}
In this paper, $\bm{1}_n$ represents the $n$-dimensional column vector with all entries of one. $\bm{0}_{n\times m}$ denotes a zero matrix with $n$ rows and $m$ columns. $\bm{I}_n$ and $\bm{0}_n$ denote the identity matrix and zero matrix of order $n$, respectively. $\lambda _{\min }(\cdot)$, $\lambda _{\max }(\cdot)$, $\sigma _{\max }(\cdot)$ and $\rho(\cdot)$ stand for the minimum eigenvalue, maximum eigenvalue, maximum singular value and spectral radius of the matrix, respectively. For matrices $P_1, P_2,..., P_n$ with appropriate dimensions, let $col(P_1, P_2,..., P_n)$ and $diag(P_1, P_2,..., P_n)$ denote the matrix $[P_1^T,P_2^T,..., P_n^T]^T$ and diagonal matrix, respectively. Moreover, for a column vector $\mathcal{D}=[\mathcal{D}_1,\mathcal{D}_2,...,\mathcal{D}_n]^T\in \mathbb{R}^n$, $g\in \mathbb{R}^{T-t+1}$
\begin{align*}
	&\text{vecv}(\mathcal{D})=[\mathcal{D}_1^2, \sqrt{2}\mathcal{D}_1\mathcal{D}_2,...,\sqrt{2}\mathcal{D}_1\mathcal{D}_n,\\ \nonumber
	&~~~~~~~~~~~~~~~~~~~~~\mathcal{D}_2^2, \sqrt{2}\mathcal{D}_2\mathcal{D}_3,..., \sqrt{2}\mathcal{D}_{n-1}\mathcal{D}_n, \mathcal{D}_n^2]^T. 
\end{align*}
For a symmetric matrix $\mathcal{S}=[\mathcal{S}_{ij}]\in \mathbb{R}^{n\times n}$,
\begin{align*}
	&\text{vecm}(\mathcal{S})=[\mathcal{S}_1, \sqrt{2}\mathcal{S}_{12},...,\sqrt{2}\mathcal{S}_{1n},\\ \nonumber
	&~~~~~~~~~~~~~~~~~~~~~~~\mathcal{S}_2, \sqrt{2}\mathcal{S}_{23},..., \sqrt{2}\mathcal{S}_{(n-1)n}, \mathcal{S}_{nn}]^T.
\end{align*}
 For an arbitrary matrix $\mathcal{M}\in \mathbb{R}^{m\times n}$, $\text{vec}(\mathcal{M})=[\mathcal{M}_1^T,\mathcal{M}_2^T,...,\mathcal{M}_n^T]^T$, where $\mathcal{M}_i$ is the $i$-th column of $\mathcal{M}$.
\subsection{Background Knowledge}
Consider a directed network topology $\mathcal{G} = (\mathcal{V},\mathcal{E})$ composed of a node set $\mathcal{V} = \{ {v_1},{v_2},...,{v_p}\}$ and an edge set ${\mathcal{E}} = \{ ({v_i},{v_j})|i,j \in \{1,2,...,p\}\}  \subseteq \mathcal{V} \times \mathcal{V}$. The adjacency matrix of $\mathcal{G}$ is defined as $\mathcal{A} = [{a_{ij}}] \in {\mathbb{R}^{p \times p}}$, where ${a_{ij}} = 1$ if $({v_j},{v_i}) \in{\mathcal{E}} $ and ${a_{ij}}=0$ otherwise. The neighbor set of node $v_i$ is denoted by ${\mathcal{N}_i} = \{ {v_j}|({v_j},{v_i}) \in {\mathcal{E}}\}$. The in-degree matrix is defined as $\mathcal{D} = diag\{ {d_1,d_2,...,d_p}\}  \in {\mathbb{R}^{p \times p}}$ with ${d_i} = \sum\nolimits_{j \in {\mathcal{N}_i}} {{a_{ij}}}$. The Laplacian matrix $\mathcal{L}$ of $\mathcal{G}$ can be expressed as $\mathcal{L} = \mathcal{D} - \mathcal{A}$.  

In this paper, we consider a group of $N + M + 1$ agents, consisting of one tracking leader, $M$ formation leaders, and $N$ followers.  Denote the formation leader set and follower set as $\mathbb{F} \buildrel \Delta \over = \{1,2,...,N\}$ and $\mathbb{L} \buildrel \Delta \over = \{N+1,N+2,...,N+M\}$, respectively. The tracking leader is designated as index $0$. To delineate the relationships between various types of agents, we use $g^q_i$ to represent the edge weight from formation leader $q$ to follower $i$, and $g^0_q$ to represent the edge weight from tracking leader to formation leader $q$. In addition, let $\bar g_i^q=1$ if formation leader $q$ to follower $i$ has a directed path and $\bar g_i^q=0$ otherwise. Likewise, $\bar{a}_{qm} = 1$ if there is a directed path from formation leader $m$ to formation leader $q$ and $\bar a_{qm}=0$ otherwise. Note that $\bar g_i^q$ and $\bar a_{qm}$ are unknown to followers and formation leaders. 

The Laplacian matrix $\mathcal{L}\in {\mathbb{R}^{{(N+M+1)} \times {(N+M+1)}}}$ associated with the network topology $\mathcal{G}$ of all agents can be divided into
\begin{align*}
	{\cal L} = \left[ {\begin{array}{*{20}{c}}
			0&{{\bm{0}_{1 \times N}}}&{{\bm{0}_{1 \times M}}}\\
			{{\bm{0}_{N \times 1}}}&{{{\cal L}_1}}&{{{\cal L}_2}}\\
			{{{\cal L}_0}}&{{\bm{0}_{M \times N}}}&{{{\cal L}_3}}
	\end{array}} \right],
\end{align*}
where the dimensions of the sub-blocks are ${\mathcal{L}_0} \in {\mathbb{R}^{M \times 1}}$, ${\mathcal{L}_1} \in {\mathbb{R}^{N \times N}}$, ${\mathcal{L}_2} \in {\mathbb{R}^{N \times M}}$ and ${\mathcal{L}_3} \in {\mathbb{R}^{M \times M}}$.
\begin{lemma}(Woodbury Matrix Identity \cite{higham2002accuracy}). \label{Lemma 1}
	If $Z \in {\mathbb{R}^{n \times n}}$ and $W\in{\mathbb{R}^{n \times n}}$ are invertible matrices, $U\in{\mathbb{R}^{n \times l}}$ and $V\in{\mathbb{R}^{l \times n}}$ are conformable matrices, then $(Z + U W V)^{-1} = Z^{-1} - Z^{-1} U (W^{-1} + V Z^{-1} U)^{-1} V Z^{-1}$.
\end{lemma}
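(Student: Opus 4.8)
The plan is to prove the identity by direct verification: denote the claimed right-hand side by $R = Z^{-1} - Z^{-1}U(W^{-1} + VZ^{-1}U)^{-1}VZ^{-1}$ and show that $(Z + UWV)R = \bm{I}_n$; since $Z + UWV$ is a square matrix, exhibiting a right inverse is enough to conclude that $R = (Z + UWV)^{-1}$. First I would expand the product by distributivity and use $ZZ^{-1} = \bm{I}_n$, which gives $(Z + UWV)R = \bm{I}_n + UWVZ^{-1} - U(W^{-1}+VZ^{-1}U)^{-1}VZ^{-1} - UWVZ^{-1}U(W^{-1}+VZ^{-1}U)^{-1}VZ^{-1}$.

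The crucial step is to show that the last three terms cancel. Factoring $U$ on the left and $(W^{-1}+VZ^{-1}U)^{-1}VZ^{-1}$ on the right out of the two negative terms, their sum is $-U\bigl(\bm{I}_l + WVZ^{-1}U\bigr)(W^{-1}+VZ^{-1}U)^{-1}VZ^{-1}$, where $l$ denotes the inner dimension. Now I would use the key observation $\bm{I}_l + WVZ^{-1}U = W\bigl(W^{-1} + VZ^{-1}U\bigr)$, so that $\bigl(\bm{I}_l + WVZ^{-1}U\bigr)(W^{-1}+VZ^{-1}U)^{-1} = W$ and the three terms collapse to $UWVZ^{-1} - UWVZ^{-1} = \bm{0}_{n\times n}$. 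Hence $(Z + UWV)R = \bm{I}_n$, proving the lemma. This computation also makes explicit that $W$ must in fact be $l\times l$ for the products to conform and that $W^{-1}+VZ^{-1}U$ must be invertible for $R$ to be defined — both implicit in the hypotheses.

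There is no genuinely hard part here; the only obstacle is the algebraic bookkeeping in the regrouping, the one real idea being to pull $W$ out as $W(W^{-1} + VZ^{-1}U)$ rather than attempting to simplify any inverse directly. If a more conceptual argument were preferred, I could instead invert the block matrix $\left[\begin{smallmatrix} Z & -U \\ V & W^{-1}\end{smallmatrix}\right]$ in two ways — eliminating the $(1,1)$ block yields the Schur complement $W^{-1} + VZ^{-1}U$, while eliminating the $(2,2)$ block yields the Schur complement $Z + UWV$ — and then equate the two resulting expressions for the $(1,1)$ block of the inverse; this reproduces the stated identity. The direct verification is shorter and self-contained, so that is the route I would take.
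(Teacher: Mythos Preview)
Your verification is correct: expanding $(Z+UWV)R$, regrouping the two negative terms as $-U(\bm{I}_l+WVZ^{-1}U)(W^{-1}+VZ^{-1}U)^{-1}VZ^{-1}$, and then factoring $\bm{I}_l+WVZ^{-1}U=W(W^{-1}+VZ^{-1}U)$ is exactly the standard one-line check of the Woodbury identity, and it goes through as written. The paper does not give its own proof of this lemma at all --- it is quoted from \cite{higham2002accuracy} as background --- so there is nothing to compare; your argument simply supplies what the paper omits. Your parenthetical remark that $W$ must actually be $l\times l$ (not $n\times n$ as stated) for the products $UWV$ and $W^{-1}+VZ^{-1}U$ to conform is also well taken and worth flagging as a typo in the statement.
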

\begin{lemma}(Pseudo-Inverse \cite{penrose_1955}). \label{Lemma 2}
	Given a constant matrix $B$, its pseudo-inverse matrix $B ^+$ has the following properties:\\
	1) $B^+BB^+=B^+$;\\
	2) $(B^TB)^+=B^+(B^T)^+$;\\
	3) $B^+=(B^TB)^+B^T$;\\
	4) $B^+=B^{-1}$ if $B$ is invertible.
\end{lemma}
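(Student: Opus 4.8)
The plan is to obtain all four items from the defining Moore--Penrose axioms, i.e.\ from the fact that $B^+$ is the unique matrix satisfying $BB^+B=B$, $B^+BB^+=B^+$, $(BB^+)^T=BB^+$ and $(B^+B)^T=B^+B$. Item 1) then requires no work at all, since it is literally the second axiom. Before attacking items 2) and 3) I would first record one auxiliary identity, $(B^T)^+=(B^+)^T$: transposing each of the four axioms for $B$ shows that $(B^+)^T$ satisfies the four Penrose equations associated with $B^T$, and uniqueness of the pseudo-inverse forces the equality. This turns the right-hand side of item 2) into $B^+(B^+)^T$.

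With that identity available, item 2) reduces to verifying that $X:=B^+(B^+)^T$ is the pseudo-inverse of $A:=B^TB$, i.e.\ that $X$ satisfies the four Penrose equations for $A$. The symmetry conditions are the quick ones: using $(B^+)^TB^T=(BB^+)^T=BB^+$ and $B^TBB^+=(BB^+B)^T=B^T$ one finds $AX=XA=B^+B$, which is symmetric by the fourth axiom. The condition $XAX=X$ also collapses immediately after substituting $(B^+)^TB^T=BB^+$ and using the idempotency of $BB^+$ together with the second axiom. The one mildly tedious step, and the one I would flag as the main obstacle, is checking $AXA=A$: I would regroup it as $B^T\bigl(BB^+(B^+)^TB^T\bigr)B$, replace $(B^+)^TB^T$ by $BB^+$, collapse $(BB^+)^2$ to $BB^+$, and finally apply $BB^+B=B$ to recover $B^TB$. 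None of these moves is deep, but the order matters and it is easy to associate the factors incorrectly.

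Item 3) then follows in a single line from item 2): $(B^TB)^+B^T=B^+(B^+)^TB^T=B^+(BB^+)^T=B^+BB^+=B^+$, the last two equalities being the third and second Penrose axioms. Finally, item 4) is again immediate from uniqueness: if $B$ is invertible then $B^{-1}$ trivially satisfies all four Penrose equations (each one reduces to an identity involving $I$ or $B^{-1}$), so $B^+=B^{-1}$.
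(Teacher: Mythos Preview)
Your argument is correct: each of the four items follows from the Moore--Penrose axioms exactly as you outline, and the auxiliary identity $(B^T)^+=(B^+)^T$ together with the computation $AX=XA=B^+B$ makes items 2) and 3) drop out cleanly. (In fact, once you have $AX=B^+B$, the check of $AXA=A$ is even shorter than you suggest: $AXA=(B^+B)(B^TB)=B^T(BB^+B)=B^TB$ directly from the first axiom, so the ``mildly tedious'' regrouping you flag is unnecessary.)

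As for comparison with the paper: there is nothing to compare against. The paper states Lemma~2 as a quoted result with a citation to Penrose and provides no proof of its own; it simply invokes properties 1)--4) later, notably in the proofs of Lemma~3 and Lemma~4. Your derivation therefore supplies strictly more than the paper does for this statement.
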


\subsection{Problem Formulation}
Consider a heterogeneous MAS with one tracking leader, $N$ followers, and $M$ formation leaders. The tracking leader is responsible for providing a reference trajectory for motion, which can either represent an actual system or a virtual reference. The formation leaders have the task of achieving specific formations relative to the tracking leader, while the followers are required to position themselves within the convex hull formed by the formation leaders. The dynamics of each follower are given by 
\begin{equation}\label{1}
	{x_{i(k + 1)}} = {A_i}{x_{i(k)}} + {B_i}{u_{i(k)}},~~~ i \in \mathbb{F}
\end{equation}
and the dynamics of each formation leader are given by
\begin{equation}\label{2}
	x_{q(k + 1)}^l = {A_q}x_{q(k)}^l + B_q {u_{q(k)}^l},~~~ q \in \mathbb{L}  
\end{equation}
where $A_i \in {\mathbb{R}^{n \times n}}$ and ${B_i} \in {\mathbb{R}^{n \times {m_i}}}$. ${x_{i(k)}} \in {\mathbb{R}^n}$ and ${u_{i(k)}} \in {\mathbb{R}^{m_i}}$ are state and control input vector of follower $i$, respectively.  Moreover, $A_q \in {\mathbb{R}^{n \times n}}$ and ${B_q} \in {\mathbb{R}^{n \times {m_q}}}$. ${x_{q(k)}^l} \in {\mathbb{R}^n}$ and ${u_{q(k)}^l} \in {\mathbb{R}^{m_q}}$ are state and control input vector of formation leader $q$, respectively. Assume that the expected time-varying formation is specified by $h_{(k)}^l = col(h_{1(k)}^l,h_{2(k)}^l,...,h_{M(k)}^l)$, where $h_{q(k)}^l \in {\mathbb{R}^n}$ has the heterogeneous dynamics expressed as
\begin{equation}\label{3}
	h_{q(k+1)}^l = S_qh_{q(k)}^l.~~~ q \in \mathbb{L}
\end{equation}
Here, $S_q \in {\mathbb{R}^{n \times n}}$ determines the shape of formation. The dynamics of tracking leader are described as 
\begin{equation}\label{4}
	x_{(k + 1)}^o = A_0x_{(k)}^o,
\end{equation}
where $A_0 \in {\mathbb{R}^{n \times n}}$ and $x_{0(k)} \in {\mathbb{R}^n}$ is the state vector of tracking leader. 
\begin{remark}
	In the context of prior research on the FCC problem, the dynamics of agents are homogeneous \cite{dong2015formation,9409728,9870036}, i.e., ${x_{i(k + 1)}} = {A}{x_{i(k)}} + {B}{u_{i(k)}}, i \in \mathbb{F}\cup\mathbb{L}$, or the dynamics of agents are heterogeneous but the formation dynamics are homogeneous \cite{WANG201826,8384027,WANG2017392,8579593,GAO2021146}, i.e., $h_{q(k+1)}^l = Sh_{q(k)}^l,q \in \mathbb{L}$. To enhance the generality of the problem and broaden its applicability, we extend our study to a scenario where the systems exhibit full heterogeneity. 
\end{remark}

In traditional FCC and CC studies, the Laplacian matrix $\mathcal{L}_1^{-1}\mathcal{L}_2$ influences the tendency of followers to gravitate toward specific formation leaders (as observed in \cite{dong2015formation,9409728,9870036,WANG201826,8384027,WANG2017392,8579593,GAO2021146,HAGHSHENAS2015210,8277155,8869852}). However, given the disparities in dynamic and formation models, coupled with the unpredictable factors encountered in real-world applications, we anticipate that followers possess the ability to adjust their positions within the convex hull in response to signals emitted by leaders. For instance, in a UAV combat scenario, an outer leader serving as a protector may emit a distancing signal after coming under attack, with the intention of prompting the inner followers to move away from it and closer to the safer leaders. This signaling behavior is captured using the propensity factor ${{\vartheta _q}}>0$ (where $q \in \mathbb{L}$), which quantifies this signal. The higher the propensity factor of leader $q$, the more the followers will tend to converge toward leader $q$, as illustrated in Fig. \ref{fig1}. 
\begin{figure}
	\centering 
	\includegraphics[scale=1.3]{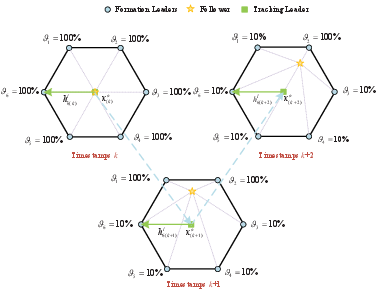}
	\caption {{ Example of the influence of propensity factors on follower convergence positions under different timestamps.}}
	\label{fig1}
\end{figure}

The assumptions and definitions for the PFCC problem are provided below.
\begin{assumption}\label{Assumption 1} 
	The directed network $\mathcal{G}$ contains a spanning tree with the tracking
	leader as the root. Additionally, for each follower, there exists at least one formation leader that has a directed path to it. 
\end{assumption}
\begin{assumption}\label{Assumption 2} 
	Every formation leader possesses a propensity factor, denoted as ${{\vartheta _q}}$, which is greater than zero (${{\vartheta _q}}>0$), and each formation leader is aware of their specific factor.
\end{assumption}

\begin{figure*}
	\centering 
	\includegraphics[scale=1.0]{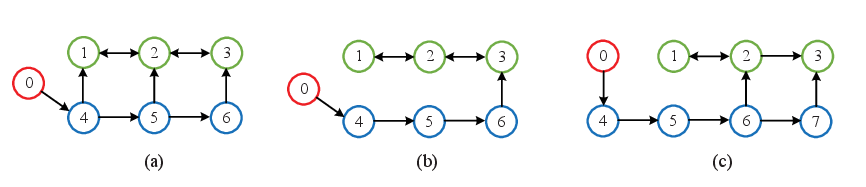}
	\caption {{ Examples to describe IFLs and ITFLs.}}
\end{figure*}

In Fig. \ref{fig1}, a single follower is illustrated for simplicity. However, in practice, there can be multiple followers, and the convergence position of each follower may not be influenced by all the leaders. Thus, it is essential to introduce the following two concepts.

\begin{definition}(Influential Formation Leaders).
	Define the set of influential formation leaders (IFLs) of follower $i$ as $\mathcal{N}_{i}^F= \{q\  \arrowvert \ \bar g_i^q > 0, q \in \mathbb{L}\}$, covering formation leaders who have a directed path to follower $i$. Define the set of influential formation leaders of leader $q$ as $\mathcal{N}_{q}^L= \{j\  \arrowvert \ \bar a_{qj}>0, j \in \mathbb{L}\}$, covering formation leaders who have a directed path to leader $q$. 
\end{definition}
\begin{definition}(Influential Transit Formation Leaders). 
	A formation leader is called an Influential Transit Formation Leader (ITFL) if it has at least one neighbor that acts as an Influential Formation Leader (IFL) for certain followers. However, this particular neighbor cannot directly relay information to these followers and relies on the ITFL to facilitate communication. 
\end{definition}

Using the graphs depicted in Fig. 2 as an example, where the red, blue, and green circles represent the tracking leader, formation leader, and follower, respectively.  In Fig. 2(a), the set of IFLs for all followers is $\{4, 5, 6\}$, and there are no ITFLs as no formation leader depends on other leaders to relay information. In Fig. 2(b), the set of IFLs for all followers remains $\{4, 5, 6\}$. However, as leader 4 relies on leaders 5 and 6, and leader 5 depends on leader 6 for information, the sets of ITFLs for node 4 and node 5 are $\{5, 6\}$ and $\{6\}$. In Fig. 2(c), the set of IFLs for followers 1 and 2 is $\{4,5,6\}$, while for follower 3, it is $\{4,5,6,7\}$. Formation leaders 4 and 5 have ITFL sets $\{5, 6, 7\}$ and $\{6, 7\}$, respectively, whereas leaders 6 and 7 do not have any ITFLs. 

 Current control methods discussed in the literature, including references \cite{HAGHSHENAS2015210,8277155,8869852,9627528,9802518,WANG201826,8384027,WANG2017392,8579593,GAO2021146,cheng2023data}, face limitations in effectively incorporating information from ITFLs. This results in an incomplete utilization of leader information under specific topological conditions. To illustrate this, consider Fig. 2(b), where all followers can only achieve tracking control for leader 6, neglecting containment control for all leaders. Combining the previous discussion, the PFCC problem is defined as follows.
 \begin{definition}(Propensity Containment Control). \label{Definition 2}
 	Consider systems \eqref{1}-\eqref{4}, the followers achieve propensity containment control if the following equation holds for $\forall i \in \mathbb{F}$:
 	\begin{equation}	
 		\mathop {\lim }\limits_{k \to \infty }  \Arrowvert{x_{i(k)}} - \sum\limits_{q = 1}^M \alpha_{i}^q x_{q(k)}^l\Arrowvert = 0,~~~ i \in \mathbb{F}
 	\end{equation}
 	where $\alpha_{i}^q={\frac{{\bar g_i^q{\vartheta _q}}}{{\sum\limits_{j = 1}^M {\bar g_i^j{\vartheta _j}} }}} \in \mathbb{R}$. It can be readily deduced that $\sum\limits_{q = 1}^M \alpha_{i}^q=1$.
 \end{definition}
 \begin{definition}(Propensity Formation-Containment Control). 
 	The heterogeneous systems \eqref{1} and \eqref{2} achieve propensity formation-containment control if all followers achieve propensity containment control and all formation leaders satisfy the following equation:
 	\begin{align}	
 		\mathop {\lim }\limits_{k \to \infty }\Arrowvert x_{q(k)}^l-h_{q(k)}^l-x_{(k)}^o\Arrowvert  = 0,~~~ q \in \mathbb{L}
 	\end{align}
 	for any initial states $x_{i(0)},i \in \mathbb{F}$ and $x_{q(0)},q \in \mathbb{L}$.	
 \end{definition}
 
The primary differences between the proposed PFCC and the traditional two-layer FCC are twofold: First, in the PFCC framework, the convergence weights of the followers are directly influenced by the propensity factors of their IFLs, rather than relying on the Laplacian matrix of the topological graph. This allows the formation leaders to adjust the followers' position without changing the graph structure. Second, the PFCC approach incorporates ITFLs, enabling followers to effectively leverage information from all their IFLs.
 
Moreover, in many existing studies (e.g., \cite{dong2015formation,9409728,9870036,WANG201826,8384027,WANG2017392,8579593,GAO2021146}), controller design relies heavily on explicit knowledge of the system's model. However, for certain multi-agent or large-scale systems where the model information is entirely unknown, these model-based approaches become ineffective. Consequently, it is valuable to explore controller design methods that utilize only data-driven information. The problem can be defined as follows.
 \begin{problem}\label{Problem 1}
 	Consider the MAS \eqref{1}-\eqref{4}. Designing control protocols for agents based on measurement data rather than dynamic models to ensure that followers efficiently utilise information from all formation leaders for accurate PFCC.
 \end{problem} 

In order to solve Problem 1, the following standard assump-
tions are made in this paper:
\begin{assumption}\label{Assumption 3} 
	$(A_i, B_i)$,  $(A_q, B_q)$, $A_0$ and $S_q$ are unknown, the pair $(A_i,B_i)$ and  $(A_q, B_q)$ are  stabilizable for $\forall i \in \mathbb{F}$ and $\forall q \in \mathbb{L}$.
\end{assumption}	
\begin{assumption}\label{Assumption 4} 
	All the eigenvalues of $S_q$, $\forall q \in \mathbb{L}$, and $A_0$ have modulus smaller than or equal to 1.
\end{assumption}
\begin{assumption}\label{Assumption 5} 
	The following linear matrix equations	
	\begin{numcases}{}
		{S_q} = {A_i} + {B_i}{U_{i,q}^h},~~~ i \in \mathbb{F}\label{5} \\ [2mm] 
		{S_q} = {A_q} + {B_q}{U_{q}^h},~~~ q \in \mathbb{L}\label{6}
	\end{numcases}
	have solutions ${U_{i,q}^h} \in {\mathbb{R}^{{m_i} \times n}}$ for $\forall i \in \mathbb{F}$ and ${U_{q}^h} \in {\mathbb{R}^{{m_q} \times n}}$ for $\forall q \in \mathbb{L}$.
\end{assumption}
\begin{assumption}\label{Assumption 6} 
	The following linear matrix equations
	\begin{numcases}{}
		{A_0} = {A_i} + {B_i}{U_{i}^o},~~~ i \in \mathbb{F}\label{7}\\[2mm] 
		{A_0} = {A_q} + {B_q}{U_{q}^o},~~~ q \in \mathbb{L}\label{8}
	\end{numcases}
	have solutions ${U_{i}^o} \in {\mathbb{R}^{{m_i} \times n}}$ for $\forall i \in \mathbb{F}$ and ${U_{q}^o} \in {\mathbb{R}^{{m_q} \times n}}$ for $\forall q \in \mathbb{L}$.
\end{assumption}
\begin{remark}
Assumption \ref{Assumption 4} is a standard assumption to ensure the  leader states do not diverge. Assumptions \ref{Assumption 5} and \ref{Assumption 6} align with standard practices found in references \cite{WANG201826,8384027,WANG2017392,8579593,GAO2021146,CAI2017299,6026912,DENG201962}, while \eqref{5}-\eqref{8} represent the state regulation equations. Importantly, in this paper, we do not necessitate knowledge of the system matrices or the need to resolve the regulation equations. We merely require the existence of equations \eqref{5}-\eqref{8}. 
\end{remark}

\section{Main Results}\label{S3}
\subsection{Model-Based Control Method for PFCC Problem}
In this section, we establish a distributed algorithm for the real-time collection of necessary control information and create a model-based control protocol to actualize PFCC. This work forms the theoretical basis for the ensuing data-driven approach.

Before formulating the controller, it is essential to establish the following definitions, which will have a pivotal role in the subsequent design and its verification.
\begin{definition}(Influential Propensity Factors).
	Define the set of influential propensity factors (IPFs) of follower $i$ as $\mathcal{P}_i^F=\{\vartheta _q\  \arrowvert \ q \in \mathcal{N}_{i}^F\}$, covering the propensity factors of influential leaders of follower $i$. Define the set of influential propensity factors of formation leader $q$ as $\mathcal{P}_q^L=\{\vartheta _j\  \arrowvert \ j \in \mathcal{N}_{q}^L\}$, covering the propensity factors of influential leaders of leader $q$.
\end{definition}
\begin{definition}(Dictionary Functions).
	Define the dictionary function $dic_i^F:\mathcal{N}_i^F \rightarrow \mathcal{P}_i^F$ of follower $i$ as $dic_i^F=\{(q,\vartheta _q)\ \arrowvert \ q \in \mathcal{N}_{i}^F\}$ with the domain of $\mathcal{N}_i^F$ and the range of $\mathcal{P}_i^F$, implies that $dic_i^F(q)=\vartheta _q$. Define the dictionary function $dic_q^L:\mathcal{N}_q^L \rightarrow \mathcal{P}_q^L$ of formation leader $q$ as $dic_q^L=\{(j,\vartheta _j)\ \arrowvert \ j \in \mathcal{N}_{q}^L\}$ with the domain of $\mathcal{N}_q^L$ and the range of $\mathcal{P}_q^L$.
\end{definition}

\begin{definition}(Real-time Augmented Systems).
	The real-time augmented systems of formation leaders and followers are defined respectively as 
	\begin{numcases}{}
		\hat X_{q(k + 1)}^l = {{\bar A}_q}\hat X_{q(k)}^l + {{\bar B}_q}u_{q(k)}^l,~~~~~~ q \in \mathbb{L} \\ [2mm]
		{\hat X_{i(k + 1)}} = {{\bar A}_{i(k)}}{\hat X_{i(k)}} + {{\bar B}_{i(k)}}{u_{i(k)}}, ~~i \in \mathbb{F}
	\end{numcases}
	with
	\begin{numcases}{}
		{{\bar A}_q} = diag(A_q,S_q,A_0),{{\bar B}_q} = \left[ {\begin{array}{*{20}{c}}
				{{B_q}}\\
				\bm{0}_{2n\times m_i}
		\end{array}} \right],~~~~ q \in \mathbb{L} \nonumber \\ [2mm]
		{{\bar A}_{i(k)}} = diag({A_i},{S_{{\phi _{i(k)}}(1)}},...,{S_{{\phi _{i(k)}}({{\mathcal I}_{i(k)}})}},{A_0}),\nonumber\\
		~~~~~~~~~~~~~~~~~~{{\bar B}_{i(k)}} = \left[ {\begin{array}{*{20}{c}}
				{{B_{i}}}\\
				{{\bm{0}_{(1 + {{\mathcal I}_{i(k)}})n \times {m_i}}}}
		\end{array}} \right], ~~~~i \in \mathbb{F}\nonumber
	\end{numcases}
	where $\hat X_{q(k)}^l=col(x_{q(k)}^l, h_{q(k)}^l,\hat x_{q(k)}^o)\in \mathbb{R}^{3n}$ and $\hat X_{i(k)}=col(x_{i(k)}, \hat h_{i(k)}^{{\phi _{i(k)}}(1)},...,\hat h_{i(k)}^{{\phi _{i(k)}}({{\mathcal I}_{i(k)}})},\hat x_{q(k)}^o)\in \mathbb{R}^{(2+{{\mathcal I}_{i(k)}})n}$ are the augmented system
	state vectors of leader $q$ and follower $i$, in which ${{\mathcal I}_{i(k)}}=|\mathcal{N}_{i(k)}^F|$ is the number of elements of ${\mathcal N}_{i(k)}^F$ and ${\phi _{i(k)}}$ is the sequential arrangement of ${\mathcal N}_{i(k)}^F$.
\end{definition} 
\begin{definition}(Stabilizing Feedback Gains).
	Define the set of stabilizing feedback gains of $(A_i,B_i),i \in \mathbb{F}$ as $\mathcal{K}_i=\{K_i\ |\ \rho(A_i+B_iK_i)<1\}$. Define the set of stabilizing feedback gains of $(A_q,B_q),q \in \mathbb{L}$ as $\mathcal{K}_q=\{K_q\ |\ \rho(A_q+B_qK_q)<1\}$.
\end{definition}

To successfully implement PFCC, the sets of IFLs,  $\mathcal{N}_{i}^F$, $\mathcal{N}_{q}^L$,  dictionary function $dic_i^F$, convex coefficient $\alpha_{i}^q$ and weights $\bar g_i^q$, $\bar a_{qm(k)}$ are crucial for each agent. We have developed Algorithm \ref{Algorithm 1} to gather this essential information in real time through a distributed process.
\begin{algorithm}[htbp]	\footnotesize
	\renewcommand{\algorithmicrequire}{Initialization:}
	\caption{Distributed algorithm to obtain ILs, IPFs, convex coefficients and weights in real-time}\label{Algorithm 1}
	\begin{algorithmic}[1]
		\State Set $\mathcal{N}_{i(0)}^F=\{q \ \arrowvert \  g_i^q > 0, q \in L \}$, $\mathcal{N}_{q(0)}^L=\{j \ \arrowvert \  j \in \mathcal{N}_q \}$, $dic_{i(0)}=\{(q,\vartheta _q)\ \arrowvert \ q \in \mathcal{N}_{i(0)}^F\}$ and $dic_{q(0)}=\{(j,\vartheta _j)\ \arrowvert \ j \in \mathcal{N}_{q(0)}^L\}$. Let $k=0$.
		\State Set $\bar g_{i(k)}^q = 
		\left\{
		\begin{array}{ll}
			1,~~~~~~~~~~~~~~~~~~~~~~~~~~~~~q\in \mathcal{N}_{i(k)}^F\\[2mm]
			0,~~~~~~~~~~~~~~~~~~~~~~~~~~~~~q \notin \mathcal{N}_{i(k)}^F
		\end{array}\right.$  $\forall i \in \mathbb{F}$.
		
		\State Set $\alpha_{i(k)}^q = 
		\left\{
		\begin{array}{ll}
			{\frac{dic_{i(k)}^F(q)}{\sum\limits_{j = 1}^M \bar g_{i(k)}^jdic_{i(k)}^F(j)}} ,~~~~~~~~~q\in \mathcal{N}_{i(k)}^F\\[2mm]
			0,~~~~~~~~~~~~~~~~~~~~~~~~~~~~~q \notin \mathcal{N}_{i(k)}^F
		\end{array}\right.$  $\forall i \in \mathbb{F}$.
		
		\State Set $\bar a_{qm(k)} = 
		\left\{
		\begin{array}{ll}
			1,~~~~~~~~~~~~~~~~~~~~~~~~~m\in \mathcal{N}_{q(k)}^L\\[2mm]
			0,~~~~~~~~~~~~~~~~~~~~~~~~~m \notin \mathcal{N}_{q(k)}^L
		\end{array}\right.$  $\forall q \in \mathbb{L}$.
		\State Calculate the set of ITFLs at time $k+1$:
		\begin{equation*}
			\mathcal{N}_{i(k+1)}^F = \mathcal{N}_{i(k)}^F\bigcup_{i\neq j=M+1 }^{M+N} a_{ij}\mathcal{N}_{j(k)}^F\bigcup_{q=1 }^Mg_i^q\mathcal{N}_{q(k)}^L,~~~ i \in \mathbb{F}
		\end{equation*}
		\begin{equation*}
			\mathcal{N}_{q(k+1)}^L = \mathcal{N}_{q(k)}^L\bigcup_{q\neq j=1 }^{M} a_{qj}\mathcal{N}_{j(k)}^L,~~~ q \in \mathbb{L}.
		\end{equation*}
		\State Calculate the set of IPFs at time $k+1$:
		\begin{equation*}
			dic_{i(k+1)}^F = dic_{i(k)}^F\bigcup_{i\neq j=M+1 }^{M+N} a_{ij}dic_{j(k)}^F\bigcup_{q=1 }^Mg_i^qdic_{q(k)}^L, ~i \in \mathbb{F}
		\end{equation*}
		\begin{equation*}
			dic_{q(k+1)}^L = dic_{q(k)}^L\bigcup_{q\neq j=1 }^{M} a_{qj}dic_{j(k)}^L,~~~ q \in \mathbb{L}.
		\end{equation*}
		\State Let $k=k+1$ and return to step 2.
	\end{algorithmic}
\end{algorithm}
\begin{proposition}\label{Proposition 1}
	Consider a directed network topology $\mathcal{G}$ composed of one tracking leader, $M$ formation leaders, and $N$ followers, satisfying Assumption \ref{Assumption 1}. Then by employing Algorithm \ref{Algorithm 1}, after a maximum of $N+M-1$ iterations, (1) $\alpha_{i(k)}^q \rightarrow \alpha_{i}^q$, $\bar g_{i(k)}^q\rightarrow\bar g_{i}^q$, $q\in \mathcal{N}_{i}^F$, $\forall i \in \mathbb{F}$, (2) $\bar a_{qm(k)}\rightarrow\bar a_{qm}$, $m\in \mathcal{N}_{q}^L$, $\forall q \in \mathbb{L}$, (3) $\mathcal{N}_{i(k)}^F \rightarrow\mathcal{N}_{i}^F$, $\forall i \in \mathbb{F}$, (4) $\mathcal{N}_{q(k)}^L \rightarrow\mathcal{N}_{q}^L$, $\forall q \in \mathbb{L}$.
\end{proposition}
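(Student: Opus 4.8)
The plan is to prove the proposition by analyzing the propagation of influence information along directed paths in the graph $\mathcal{G}$, treating Algorithm \ref{Algorithm 1} as a discrete breadth-first-search-type flooding process. The central claim is that the update rules in steps 5--6 are monotone (sets only grow) and that the ``reach'' of information advances by at least one hop per iteration until a fixed point is attained, which must happen within $N+M-1$ steps because the longest simple path among the $N+M$ non-tracking-leader nodes has at most $N+M-1$ edges. First I would establish monotonicity: from the union structure of the updates, $\mathcal{N}_{i(k)}^F \subseteq \mathcal{N}_{i(k+1)}^F$ and $\mathcal{N}_{q(k)}^L \subseteq \mathcal{N}_{q(k+1)}^L$ for all $k$, and likewise the dictionary functions only accumulate pairs $(q,\vartheta_q)$; since each $\mathcal{N}_i^F \subseteq \mathbb{L}$ and $\mathcal{N}_q^L\subseteq\mathbb{L}$ are finite, the iterations stabilize in finitely many steps.

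Next I would prove the correctness of the fixed point by a double inclusion against the ground-truth sets $\mathcal{N}_i^F=\{q\mid \bar g_i^q>0\}$ and $\mathcal{N}_q^L=\{j\mid\bar a_{qj}>0\}$. For soundness ($\mathcal{N}_{i(k)}^F\subseteq\mathcal{N}_i^F$ for all $k$), I would induct on $k$: any $q$ entering $\mathcal{N}_{i(k+1)}^F$ arrives either because $q$ was already in $\mathcal{N}_{i(k)}^F$, or through a neighbor $j$ (follower with $a_{ij}=1$) having $q\in\mathcal{N}_{j(k)}^F$, or through a formation-leader neighbor with $g_i^q>0$ and $q\in\mathcal{N}_{q'(k)}^L$; in each case composing the inductive directed path with the edge $(j,i)$ or $(q',i)$ yields a directed path from $q$ to $i$, so $\bar g_i^q>0$. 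For completeness I would show that if there is a directed path from formation leader $q$ to follower $i$ of length $\ell$, then $q\in\mathcal{N}_{i(\ell)}^F$, again by induction on $\ell$: the base case $\ell=1$ is handled by the initialization in step 1, and the inductive step peels off the last edge of the path and invokes the corresponding union term in the step-5 update. The analogous statements for $\mathcal{N}_q^L$ and for the dictionary functions $dic_{i(k)}^F$, $dic_{q(k)}^L$ follow by the same path-tracking argument, since a pair $(q,\vartheta_q)$ travels exactly alongside the index $q$; correctness of $\bar g_{i(k)}^q\to\bar g_i^q$, $\bar a_{qm(k)}\to\bar a_{qm}$, and $\alpha_{i(k)}^q\to\alpha_i^q$ is then immediate from steps 2--4 once the sets and dictionaries have converged, because $\alpha_{i(k)}^q$ is defined by exactly the same formula as $\alpha_i^q$ with $\bar g_{i(k)}^j$ and $dic_{i(k)}^F(j)$ in place of their limits.

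Finally I would pin down the iteration bound. The key observation is that the shortest directed path from any formation leader to any follower (or to any other formation leader) that exists at all uses at most $N+M-1$ edges, since it is a simple path through a subset of the $N+M$ followers and formation leaders (the tracking leader has no incoming edges and hence never lies on such a path, and by Assumption \ref{Assumption 1} the relevant reachability is within this node set). By the completeness induction, every element that ever belongs to $\mathcal{N}_i^F$ has entered $\mathcal{N}_{i(k)}^F$ by the time $k$ equals the length of its shortest witnessing path, hence by $k=N+M-1$; combined with soundness, $\mathcal{N}_{i(N+M-1)}^F=\mathcal{N}_i^F$, and similarly for the leader sets and dictionaries. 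I expect the main obstacle to be the bookkeeping in the completeness direction: one must argue carefully that an information item reaching a node $j$ at iteration $k-1$ is actually forwarded to $i$ at iteration $k$ through the precise union term used in step 5 (distinguishing the follower-to-follower channel $a_{ij}\mathcal{N}_{j(k)}^F$ from the leader-to-follower channel $g_i^q\mathcal{N}_{q(k)}^L$), and that the path one constructs is simple so that its length is genuinely bounded by $N+M-1$ rather than merely finite; replacing an arbitrary walk by a simple path of no greater length is the clean way around this.
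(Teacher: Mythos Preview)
Your proposal is correct and follows essentially the same idea as the paper's proof: information propagates one hop per iteration through the network, so the worst case is a chain (simple path) through all $N+M$ non-tracking-leader nodes, requiring at most $N+M-1$ iterations. Your treatment is considerably more detailed than the paper's---which merely asserts the chain-structure worst case in a single sentence without the monotonicity, soundness/completeness, and simple-path arguments you supply---but the underlying approach is the same.
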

\begin{proof}
	 According to the definitions of $\alpha_{i}^q$, $\bar g_{i}^q$, $\bar a_{qm}$, $\mathcal{N}_{i}^F$ and  $\mathcal{N}_{q}^L$, when employing distributed Algorithm 1, the topology structure that results in the longest time for all agents to acquire the necessary information is a fully chained structure. It takes $N+M-1$ iterations to transmit information from the top agent of the chain structure to the last agent. Hence, the maximum number of iterations required to obtain the required information through Algorithm 1 will not exceed $N+M-1$. The proof is completed. 
\end{proof}

Then, the following model-based state feedback controllers are designed to solve PFCC problem:
\begin{numcases}{}
	u_{q(k)}^l = {K_{q1}}x_{q(k)}^l + {K_{qh}}h_{q(k)}^l + {K_{qo}}\hat x_{q(k)}^o, \label{17} \\[2mm] 
	{u_{i(k)}} = K_{i1}{x_{i(k)}} + {K_{io}}\hat x_{i(k)}^o+ \sum\limits_{q \in {\mathcal N}_{i(k)}^F} {K_{ih}^q\alpha _{i(k)}^q\hat h_{i(k)}^{q}},\label{18}
\end{numcases}
where $\hat x_{q(k)}^o\in \mathbb{R}^n $, $\hat x_{i(k)}^o\in \mathbb{R}^n$ and $\hat h_{i(k)}^{q}\in \mathbb{R}^n$ represent observations made by leader $q$ for $x_{(k)}^o$, and by follower $i$ for $x_{(k)}^o$ and $h_{q(k)}^l$, respectively, which are designed in the later. $K_{q1}\in \mathbb{R}^{m_q\times n},K_{qh}\in \mathbb{R}^{m_q\times n},K_{qo}\in \mathbb{R}^{m_q\times n}$ are the control gain matrices of leader $q$ and $K_{i1}\in \mathbb{R}^{m_i\times n},K_{ih}^q\in \mathbb{R}^{m_i\times n},K_{io}\in \mathbb{R}^{m_i\times n}$ are the control gain matrices of follower $i$.
\begin{lemma} \label{Lemma 3}
	For any solvable matrix equation of the form $S=A+BM$ where $M$ is the matrix to be solved, the equation
	\begin{equation}
		B^+B\bar M=\bar M \nonumber
	\end{equation}
	remains true, where $\bar M$ is the minimum norm solution $\bar M=B^+ (S - A)$.
\end{lemma}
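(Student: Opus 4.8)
The plan is to exploit the explicit form of the minimum norm solution $\bar M = B^+(S-A)$ together with the Penrose identities listed in Lemma \ref{Lemma 2}. Writing $\bar M = B^+(S-A)$, I would compute $B^+ B \bar M = B^+ B B^+ (S-A)$, and then invoke property 1) of Lemma \ref{Lemma 2}, namely $B^+ B B^+ = B^+$, to collapse the product $B^+ B B^+$ back to $B^+$. This immediately gives $B^+ B \bar M = B^+(S-A) = \bar M$, which is exactly the claimed identity. So the entire argument is essentially a one-line substitution followed by a single application of the pseudo-inverse identity already granted to us.

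A small point I would address first is why the minimum norm solution is indeed $\bar M = B^+(S-A)$: since $S = A + BM$ is assumed solvable, the matrix equation $BM = S-A$ is consistent, and the standard least-squares/minimum-norm theory (again using Lemma \ref{Lemma 2}) tells us that among all solutions $M$, the one of smallest Frobenius norm is $\bar M = B^+(S-A)$, with $B^+ \bar M$... more precisely $B B^+ (S-A) = S-A$ by consistency. I would state this briefly, perhaps noting that one can also verify directly that $\bar M$ solves the equation: $B\bar M = B B^+(S-A) = S-A$ because $S-A$ lies in the column space of $B$. This framing makes clear that the lemma is really a statement about any matrix $N = S-A$ in the range of $B$: $B^+ B B^+ N = B^+ N$.

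I do not anticipate a genuine obstacle here; the result is a direct corollary of the Penrose identities. If anything, the only thing to be careful about is making sure the chain of equalities $B^+ B \bar M = B^+ B B^+ (S-A) = B^+ (S-A) = \bar M$ is written cleanly and that the invoked property is the correct one (property 1 of Lemma \ref{Lemma 2}, not property 3). One could alternatively route through property 3), $B^+ = (B^T B)^+ B^T$, but that is a longer and unnecessary detour. I would therefore present the proof as the short substitution argument, explicitly citing Lemma \ref{Lemma 2}, and remark that consistency of $S = A + BM$ is what guarantees $\bar M$ is a bona fide solution rather than merely a least-squares approximant.
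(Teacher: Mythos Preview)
Your proposal is correct and matches the paper's proof essentially line for line: the paper also substitutes $\bar M = B^+(S-A)$ into $B^+ B \bar M$, obtains $B^+ B B^+ (S-A)$, and applies property 1) of Lemma~\ref{Lemma 2} to conclude. Your additional remarks on consistency and the minimum-norm characterization are sound but go beyond what the paper includes.
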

\begin{proof}
	 Since $\bar M=B^+(S-A)$, we have $B^+B\bar M=B^+BB^+(S-A)=B^+(S-A)=\bar M$, where the second equal sign is obtained from Lemma \ref{Lemma 2}. The proof is completed.
\end{proof}

The next result reveals the relationship between the regulation equations and the designed control gains.
\begin{lemma} \label{Lemma 4}
	Given the control protocols $\eqref{17}$ and $\eqref{18}$, design the control gain matrices $K_q^L = \left[ {{K_{q1}},{K_{qh}},{K_{qo}}} \right], q \in \mathbb{L}$ and $K_{i(k)}^F = \left[ {{K_{i1}},K_{ih}^{{\phi _{i(k)}}(1)},...,K_{ih}^{{\phi _{i(k)}}({{\mathcal I}_{i(k)}})},{K_{io}}} \right], i \in \mathbb{F}$ as
	\begin{numcases}{}
		K_q^L =  - {(\bar B_q^T{P_q}{{\bar B}_q})^{ +}}\bar B_q^T{P_q}{{\bar A}_q},~~~~~~~~~~~~~~~~~~~~ \label{19}\\ [2mm]
		K_{i(k)}^F =  - {(\bar B_{i(k)}^T{P_{i(k)}}{{\bar B}_{i(k)}})^{ +}}\bar B_{i(k)}^T{P_{i(k)}}{{\bar A}_{i(k)}},  \label{20}
	\end{numcases}
	$P_q,q \in \mathbb{L}$ and $P_{i(k)},i \in \mathbb{F}$  satisfies the following Riccati equations:
	\begin{numcases}{}
		{P_q} = C_l^T{Q_q}{C_l} +{({{\bar A}_q} + {{\bar B}_q}K_q^L)^T}{P_q}({{\bar A}_q} + {{\bar B}_q}K_q^L), \label{21}\\ [2mm]
		{P_{i(k)}} = C_{i(k)}^T{Q_i}{C_{i(k)}}	+  {({{\bar A}_{i(k)}} + {{\bar B}_{i(k)}}K_{i(k)}^F)^T} \nonumber\\
		~~~~~~~~~~~~~~~~~~~~~~~~~~~~~~\times{P_{i(k)}}({{\bar A}_{i(k)}} + {{\bar B}_{i(k)}}K_{i(k)}^F), \label{22}
	\end{numcases}
	where ${C_i} = [{\bm{I}_n}, - \alpha _{i(k)}^{{\phi _{i(k)}}(1)}{\bm{I}_n},..., - \alpha _{i(k)}^{{\phi _{i(k)}}({{\mathcal I}_{i(k)}})}{\bm{I}_n},-{\bm{I}_n}]$, ${C_l} = \left[ {{\bm{I}_n}, - {\bm{I}_n}, - {\bm{I}_n}} \right]$, $Q_q=Q_q^T>0$, $Q_i=Q_i^T>0$. Then, the control gain matrices satisfy
	\begin{align}{}
		&\left\{
		\begin{array}{ll}
			K_{q1}\in \mathcal{K}_q, \\[2mm]
			K_{q1} + K_{qh} = \bar U_q^h,~~ K_{q1} + K_{qo} = \bar U_q^o, ~~~ q \in \mathbb{L}
		\end{array}\right.\label{23}  \\
		&\left\{
		\begin{array}{ll}
			K_{i1}\in\mathcal{K}_i,\\
			K_{i1} +\dfrac{1}{\alpha _{i(k)}^{{\phi _{i(k)}}(j)}} K_{ih}^{{\phi _{i(k)}}(j)} = \bar U_{i,{{\phi _{i(k)}}(j)}}^h, \\[2mm]
			K_{i1} + K_{io} = \bar U_i^o,~~~~~~~j=1,\cdots,{{\mathcal I}_{i(k)}},~i \in \mathbb{F},
		\end{array}\right. \label{24}
	\end{align}
	where $\bar U_{i,{{\phi _{i(k)}}(j)}}^h$, $\bar U_q^h$, $\bar U_i^o$ and $\bar U_q^o$ are the minimum norm solutions of \eqref{5}-\eqref{8}, respectively.
\end{lemma}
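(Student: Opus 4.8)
The plan is to exploit the structure of the discrete-time LQR/Riccati solution: equations \eqref{19}--\eqref{22} are exactly the optimal-control/ARE pair for a fictitious LQR problem with state matrix $\bar A_q$ (resp.\ $\bar A_{i(k)}$), input matrix $\bar B_q$ (resp.\ $\bar B_{i(k)}$), and output matrix $C_l$ (resp.\ $C_{i(k)}$). First I would verify the stabilizability of the augmented pairs: since $(A_q,B_q)$ and $(A_i,B_i)$ are stabilizable (Assumption~\ref{Assumption 3}) and the extra block-diagonal modes $S_q$, $A_0$ have all eigenvalues inside or on the unit circle (Assumption~\ref{Assumption 4}) but are \emph{not} controllable through $\bar B_q$ or $\bar B_{i(k)}$, the augmented pair is stabilizable iff those uncontrollable modes are detectable through the chosen ``output'' $C_l$ / $C_{i(k)}$ --- which they are, because the regulation equations \eqref{5}--\eqref{8} guarantee the zero dynamics associated with $C_l$, $C_{i(k)}$ are exactly the leader/formation modes. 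This gives existence of a unique positive semidefinite stabilizing solution $P_q$, $P_{i(k)}$, and in particular $\rho(\bar A_q+\bar B_qK_q^L)<1$, $\rho(\bar A_{i(k)}+\bar B_{i(k)}K_{i(k)}^F)<1$. Reading off the $(1,1)$ block of the closed-loop augmented matrix then yields $\rho(A_q+B_qK_{q1})<1$ and $\rho(A_i+B_iK_{i1})<1$, i.e.\ $K_{q1}\in\mathcal K_q$ and $K_{i1}\in\mathcal K_i$, which is the first line of \eqref{23} and \eqref{24}.

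Next I would establish the affine identities in \eqref{23}--\eqref{24}. The key observation is that along the invariant subspace on which the augmented state is ``synchronized'' (i.e.\ $h_{q(k)}^l$ and $\hat x^o$ evolve as copies scaled by the formation/leader modes), the output $C_l\hat X_q^l$ and $C_{i(k)}\hat X_i$ vanish, so the optimal cost from such states is zero and $P_q$, $P_{i(k)}$ annihilate that subspace. Concretely, I would show that the stabilizing solution satisfies $P_q\,\Pi_q=0$ for $\Pi_q$ spanning $\{(v,v,v)\}$ and similarly $P_{i(k)}$ annihilates the analogous ``weighted-consensus'' subspace built from the $\alpha_{i(k)}^q$'s; this follows by plugging the regulation-equation-based feedforward terms $\bar U_q^h$, $\bar U_q^o$, etc., into the closed loop and checking the output is identically zero, hence the Riccati right-hand side forces $P$ to vanish there. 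Feeding $P_q\Pi_q=0$ back into \eqref{19} via $K_q^L=-(\bar B_q^TP_q\bar B_q)^+\bar B_q^TP_q\bar A_q$ and using Lemma~\ref{Lemma 3} (which tells us $\bar B_q^+\bar B_q$ acts as identity on the relevant minimum-norm solutions) pins down the action of $K_q^L$ on the columns corresponding to the $h$- and $\hat x^o$-blocks, giving $K_{q1}+K_{qh}=\bar U_q^h$ and $K_{q1}+K_{qo}=\bar U_q^o$; the follower case is identical but carries the extra scalar $1/\alpha_{i(k)}^{\phi_{i(k)}(j)}$ because the controller \eqref{18} premultiplies $K_{ih}^q$ by $\alpha_{i(k)}^q$ while the augmented output $C_{i(k)}$ uses $-\alpha_{i(k)}^q \bm I_n$ in the $j$-th block.

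I expect the main obstacle to be the second part --- rigorously showing that the \emph{particular} (minimum-norm, via the pseudo-inverse in \eqref{19}--\eqref{20}) stabilizing gain produced by the Riccati equation reproduces exactly the minimum-norm regulation solutions $\bar U_q^h$, $\bar U_i^o$, etc., rather than some other solution of \eqref{5}--\eqref{8}. This hinges on a careful bookkeeping of ranges: one must argue that $\bar B_q^TP_q\bar A_q$ lands in $\mathrm{range}(\bar B_q^TP_q\bar B_q)$ so that the pseudo-inverse behaves like a genuine inverse on the relevant subspace, and that the pseudo-inverse in \eqref{19} composed with the block structure of $\bar B_q=\mathrm{col}(B_q,\bm 0)$ collapses to $B_q^+$ acting blockwise --- this is precisely where Lemmas~\ref{Lemma 1}, \ref{Lemma 2}, and \ref{Lemma 3} do the heavy lifting. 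A secondary technical point is handling the time-dependence/variable dimension of $\bar A_{i(k)}$, $\bar B_{i(k)}$: I would note that Proposition~\ref{Proposition 1} guarantees $\mathcal N_{i(k)}^F$, $\alpha_{i(k)}^q$ stabilize after finitely many steps, so the Riccati equation \eqref{22} is eventually time-invariant and the argument above applies verbatim to the stationary regime, with the transient handled by noting each \eqref{22} is a well-posed finite-dimensional ARE in its own right.
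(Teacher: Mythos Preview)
Your first step contains a genuine gap. You claim the augmented pair $(\bar A_q,\bar B_q)$ is stabilizable and conclude $\rho(\bar A_q+\bar B_qK_q^L)<1$, from which you would read off $K_{q1}\in\mathcal K_q$. But the blocks $S_q$ and $A_0$ are \emph{completely} uncontrollable through $\bar B_q=col(B_q,\bm 0)$, and Assumption~\ref{Assumption 4} permits them to have eigenvalues \emph{on} the unit circle; in that case no feedback makes the augmented closed loop Schur, and in fact $\rho(\bar A_q+\bar B_qK_q^L)=1$. (Your sentence ``the augmented pair is stabilizable iff those uncontrollable modes are detectable through $C_l$'' conflates stabilizability of $(A,B)$---which is independent of any output---with well-posedness of the ARE.) The paper never asserts augmented Schur stability. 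It partitions $P_{i(k)}$ into $n\times n$ blocks $P_{i(k)}^{r,s}$ and reads the $(1,1)$-block of \eqref{22} as the ordinary Lyapunov equation
\[
P_{i(k)}^{1,1}=Q_i+(A_i+B_iK_{i1})^TP_{i(k)}^{1,1}(A_i+B_iK_{i1}),
\]
which with $Q_i>0$ forces $K_{i1}\in\mathcal K_i$ directly. You will need this route, or an equivalent one that does not pass through stability of the full augmented loop.

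For the affine identities your ``$P_q\Pi_q=0$'' idea is a legitimate and conceptually cleaner alternative to what the paper does, but two things need fixing. First, your concrete subspace is wrong: $C_l\,col(v,v,v)=-v\ne0$, so $\{(v,v,v)\}\not\subset\ker C_l$; the correct regulated subspace is $\ker C_l=\{(b+c,b,c):b,c\in\mathbb R^n\}$. Second, justifying $P\Pi=0$ via ``optimal cost vanishes on the regulated manifold'' is delicate here because the LQR is singular ($R=0$) and the pair is not stabilizable, so the ARE$\Leftrightarrow$value-function identification is not automatic. The paper's route is purely algebraic and sidesteps this: from the off-diagonal $(1,1{+}j)$ blocks of \eqref{22} it extracts a discrete Sylvester relation
\[
\mathcal Y=(A_i+B_iK_{i1})^T\bigl(\bm I_n-P_{i(k)}^{1,1}B_i(B_i^TP_{i(k)}^{1,1}B_i)^+B_i^T\bigr)\,\mathcal Y\,S_{\phi_{i(k)}(j)}
\]
for $\mathcal Y:=P_{i(k)}^{1,1}+\tfrac{1}{\alpha}P_{i(k)}^{1,1+j}$, uses $\rho(A_i+B_iK_{i1})<1$ together with $\rho(S_{\phi_{i(k)}(j)})\le1$ to force $\mathcal Y=0$, and then substitutes back---invoking Lemma~\ref{Lemma 3} exactly where you anticipated---to obtain $K_{i1}+\tfrac{1}{\alpha}K_{ih}^{\phi_{i(k)}(j)}=\bar U_{i,\phi_{i(k)}(j)}^h$. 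Your kernel argument, once the subspace is corrected and $P\Pi=0$ is established algebraically rather than via an optimality principle, would recover the same conclusion.
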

\begin{proof}
	 To begin with, demonstrate the accuracy of \eqref{24}. According to \eqref{22}, $P_{i(k)}=P_{i(k)}^T > 0$ can be obtained, without loss of generality, divide $P_{i(k)}$ into $P_{i(k)}=\left[ {\begin{array}{*{20}{c}}
			{P_{i(k)}^{1}}\\
			\vdots \\
			{P_{i(k)}^{2+{{\mathcal I}_{i(k)}}}}
	\end{array}} \right]$, where $P_{i(k)}^{m}= \left[{\begin{array}{*{20}{c}}
			{P_{i(k)}^{m,1}}& \cdots &{P_{i(k)}^{m,2+{{\mathcal I}_{i(k)}}}}
	\end{array}}\right] \in\mathbb{R}^{n \times (2+{{\mathcal I}_{i(k)}})n} ,m =1,...,2+{{\mathcal I}_{i(k)}}$. Then, expanding \eqref{20} yields
	\begin{align}
		&K_{i(k)}^F= -(B_i^T{P_{i(k)}^{1,1}}B_i)^{+}\nonumber \\ &~~~~~~~~~~~\times\left[B_i^TP_{i(k)}^{1,1}\  B_i^TP_{i(k)}^{1,2}\cdots B_i^TP_{i(k)}^{1,2+{{\mathcal I}_{i(k)}}}\right]  \nonumber \\
		&~~~~~~~~~~~~\times \left[ {\begin{array}{*{20}{c}}
				{{A_i}}&{}&{}&{}\\
				{}&{{{S_{{\phi _{i(k)}}(1)}}}}&{}&{}\\
				{}&{}& \ddots &{}\\
				{}&{}&{}&{{A_0}}
		\end{array}} \right].
	\end{align}
	Then, we can obtain the following equations for $j=1,\cdots,{{\mathcal I}_{i(k)}},~i \in \mathbb{F}$:
	\begin{numcases}{}
		K_{i1} +\dfrac{1}{\alpha _{i(k)}^{{\phi _{i(k)}}(j)}} K_{ih}^{{\phi _{i(k)}}(j)} = -(B_i^T{P_{i(k)}^{1,1}}B_i)^{+}B_i^T \nonumber\\\label{51}
		\hphantom{K_{i1} +}~~~~\times(P_{i(k)}^{1,1}A_i+\dfrac{1}{\alpha _{i(k)}^{{\phi _{i(k)}}(j)}}P_{i(k)}^{1,1+j}S_{{\phi _{i(k)}}(j)}),  \\[2mm]
		K_{i1} + K_{io} =  -(B_i^T{P_{i(k)}^{1,1}}B_i)^{+}B_i^T\nonumber\\\label{52}
		\hphantom{K_{i1} + K_{io}=  -}~~~~~ \times(P_{i(k)}^{1,1}A_i+P_{i(k)}^{1,2+{{\mathcal I}_{i(k)}}}A_0).
	\end{numcases}
	
	Next, recalling \eqref{22}, simultaneously multiply $diag({\bm{I}_n}, \dfrac{1}{\alpha _{i(k)}^{{\phi _{i(k)}}(1)}}{\bm{I}_n},..., \dfrac{1}{\alpha _{i(k)}^{{\phi _{i(k)}}({{\mathcal I}_{i(k)}})}}{\bm{I}_n} ,{I_n})$ on the right-hand side of $P_{i(k)}$, the first group of transformed block matrices is
	\begin{align}\label{53}
		&{}
		\begin{bmatrix} 
			{P_{i(k)}^{1,1}}\ \dfrac{1}{\alpha _{i(k)}^{{\phi _{i(k)}}(1)}}P_{i(k)}^{1,2}\  \cdots \ {P_{i(k)}^{1,2+{{\mathcal I}_{i(k)}}}}
		\end{bmatrix}=\begin{bmatrix} 
			Q_i\ -Q_i\cdots-Q_i
		\end{bmatrix}\nonumber \\
		&+{({A_i} + {B_i}{K_{i1}})}^T[{P_{i(k)}^{1,1}}\ {P_{i(k)}^{1,2}}\   \cdots \ {P_{i(k)}^{1,2+{{\mathcal I}_{i(k)}}}}]\nonumber \\
		&\times \scriptsize \left[ {\begin{array}{*{20}{c}}
				{{A_i+B_iK_{i1}}}&{\dfrac{1}{\alpha _{i(k)}^{{\phi _{i(k)}}(1)}}B_i K_{ih}^{{\phi _{i(k)}}(1)}}&\cdots&{B_iK_{io}}\\
				{}&{\dfrac{1}{\alpha _{i(k)}^{{\phi _{i(k)}}(1)}}{{S_{{\phi _{i(k)}}(1)}}}}&{}&{}\\
				{}&{}& \ddots &{}\\
				{}&{}&{}&{{A_0}}
		\end{array}} \right]&
	\end{align}
	and the following equations can be obtained for $j=1,\cdots,{{\mathcal I}_{i(k)}},~i \in \mathbb{F}$:
	\begin{equation}\label{54} 	
		\left\{
		\begin{array}{ll}
			P_{i(k)}^{1,1} +\dfrac{1}{\alpha _{i(k)}^{{\phi _{i(k)}}(j)}} P_{i(k)}^{1,1+j} = {({A_i} + {B_i}{K_{i1}})}^T\dfrac{1}{\alpha _{i(k)}^{{\phi _{i(k)}}(j)}} \\
			~~~~~~~~~~~~~~\times P_{i(k)}^{1,1+j}S_{{\phi _{i(k)}}(j)}	+{({A_i} + {B_i}{K_{i1}})}^TP_{i(k)}^{1,1} \\
			~~~~~~~~~~~~~~~~~\times(A_i+B_iK_{1i}+B_i\dfrac{1}{\alpha _{i(k)}^{{\phi _{i(k)}}(j)}} K_{ih}^{{\phi _{i(k)}}(j)}), \\[2mm]
			P_{i(k)}^{1,1} + P_{i(k)}^{1,2+{\mathcal I}_{i(k)}} = {({A_i} + {B_i}{K_{i1}})}^T P_{i(k)}^{1,2+{\mathcal I}_{i(k)}}A_0 \\
			~~~~~~~~+{({A_i} + {B_i}{K_{i1}})}^TP_{i(k)}^{1,1}(A_i+B_iK_{1i}+B_i K_{io}).
		\end{array}\right.
	\end{equation}
	Define $\mathcal{X}_{i(k)}^{{\phi _{i(k)}}(j)}= K_{i1} +\dfrac{1}{\alpha _{i(k)}^{{\phi _{i(k)}}(j)}} K_{ih}^{{\phi _{i(k)}}(j)}$, $\mathcal{X}_{i}^o=	K_{i1} + K_{io}$, $\mathcal{Y}_{i(k)}^{{\phi _{i(k)}}(j)}= 	P_{i(k)}^{1,1} +\dfrac{1}{\alpha _{i(k)}^{{\phi _{i(k)}}(j)}} P_{i(k)}^{1,1+j}$ and $\mathcal{Y}_{i}^o=P_{i(k)}^{1,1} + P_{i(k)}^{1,2+{\mathcal I}_{i(k)}}$, then from \eqref{5} and \eqref{51}, we have 
	\begin{flalign}\label{55}
		&\mathcal{X}_{i(k)}^{{\phi _{i(k)}}(j)} = -(B_i^T{P_{i(k)}^{1,1}}B_i)^{+}B_i^T\mathcal{Y}_{i(k)}^{{\phi _{i(k)}}(j)}A_i\nonumber \\
		&~~~~-(B_i^T{P_{i(k)}^{1,1}}B_i)^{+}B_i^T\dfrac{1}{\alpha _{i(k)}^{{\phi _{i(k)}}(j)}} P_{i(k)}^{1,1+j}B_i\bar U_{i,{{\phi _{i(k)}}(j)}}^h\nonumber \\
		&=-(B_i^T{P_{i(k)}^{1,1}}B_i)^{+}B_i^T\mathcal{Y}_{i(k)}^{{\phi _{i(k)}}(j)}S_{{\phi _{i(k)}}(j)}+B_i^+B_i\bar U_{i,{{\phi _{i(k)}}(j)}}^h\nonumber \\
		&=-(B_i^T{P_{i(k)}^{1,1}}B_i)^{+}B_i^T\mathcal{Y}_{i(k)}^{{\phi _{i(k)}}(j)}S_{{\phi _{i(k)}}(j)}+\bar U_{i,{{\phi _{i(k)}}(j)}}^h,&
	\end{flalign}
	where the the second equal sign is obtained from Lemma \ref{Lemma 2} and the third equal sign is obtained from Lemma \ref{Lemma 3}. From \eqref{5} and \eqref{54}, we have
	\begin{flalign}\label{56}
		&\mathcal{Y}_{i(k)}^{{\phi _{i(k)}}(j)} =   {({A_i} + {B_i}{K_{i1}})}^T \mathcal{Y}_{i(k)}^{{\phi _{i(k)}}(j)}S_{{\phi _{i(k)}}(j)}\nonumber\\
		&+ {({A_i} + {B_i}{K_{i1}})}^T{P_{i(k)}^{1,1}}B_i(\mathcal{X}_{i(k)}^{{\phi _{i(k)}}(j)}-\bar U_{i,{{\phi _{i(k)}}(j)}}^h).&
	\end{flalign}
	Substituting \eqref{55} into \eqref{56} yields
	\begin{flalign}\label{57}
		\mathcal{Y}_{i(k)}^{{\phi _{i(k)}}(j)} =  &{({A_i} + {B_i}{K_{i1}})}^T (\bm{I}_n-{P_{i(k)}^{1,1}}B_i(B_i^T{P_{i(k)}^{1,1}}B_i)^{+}B_i^T)\nonumber \\
		&\times \mathcal{Y}_{i(k)}^{{\phi _{i(k)}}(j)}S_{{\phi _{i(k)}}(j)}.
	\end{flalign}

	According to \eqref{53}, we have $P_{i(k)}^{1,1}=Q_i+(A_i+B_iK_{i1})^TP_{i(k)}^{1,1}(A_i+B_iK_{i1})$, since $Q_i > 0$, it follows that the eigenvalues of $(A_i+B_iK_{i1})$ must be within the unit circle, i.e. $K_{i1}\in\mathcal{K}_i$. Therefore, the eigenvalues of matrices $S_{{\phi _{i(k)}}(j)}$ less than or equal to $1$, the eigenvalues of ${({A_i} + {B_i}{K_{i1}})}^T$ and $\bm{I}_n-{P_{i(k)}^{1,1}}B_i(B_i^T{P_{i(k)}^{1,1}}B_i)^{+}B_i^T$ less than $1$. Consequently, we can conclude that matrix $\mathcal{Y}_{i(k)}^{{\phi _{i(k)}}(j)}=\bm{0}_{n\times n}$. Substituting it into \eqref{55} to get $K_{i1} +\dfrac{1}{\alpha _{i(k)}^{{\phi _{i(k)}}(j)}} K_{ih}^{{\phi _{i(k)}}(j)} =\bar U_{i,{{\phi _{i(k)}}(j)}}^h$ for $j=1,\cdots,{{\mathcal I}_{i(k)}},i \in \mathbb{F}$. It is possible to use a similar method to conclude that $K_{i1} + K_{io} = \bar U_i^o,i \in \mathbb{F}$. So far, the proof of \eqref{24} has been completed. In addition, \eqref{19} and \eqref{21} can be considered as a special case of \eqref{20} and \eqref{22}, so the proof of \eqref{23} is easy to be obtained and omitted.
\end{proof}
\begin{theorem}\label{Theorem 1}
	Consider the heterogeneous MAS \eqref{1}-\eqref{4} with Assumptions \ref{Assumption 1}-\ref{Assumption 6} hold. The Problem \ref{Problem 1} can be solved using the model-based feedback controllers \eqref{17}-\eqref{18} with the control gains constructed by Lemma \ref{Lemma 4}, if the observations satisfy $\mathop {\lim }\limits_{k \to \infty } {\hat x_{q(k)}^o-x_{(k)}^o}= \bm{0}, q \in \mathbb{L}$, $\mathop {\lim }\limits_{k \to \infty } {\hat x_{i(k)}^o-x_{(k)}^o}= \bm{0}, i \in \mathbb{F}$ and $\mathop {\lim }\limits_{k \to \infty } {\hat h_{i(k)}^{q}-h_{q(k)}^{l}}= \bm{0}, q\in {\mathcal N}_{i}^F, i \in \mathbb{F}$.
\end{theorem}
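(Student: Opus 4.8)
The plan is to treat the two layers separately and, in each, reduce the relevant closed loop to an exponentially stable autonomous recursion perturbed by a signal that vanishes by the assumed observer convergence.

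\emph{Formation-leader layer.} First I would substitute \eqref{17} into \eqref{2}. By \eqref{23} of Lemma \ref{Lemma 4}, $B_qK_{qh}=B_q(\bar U_q^h-K_{q1})$ and $B_qK_{qo}=B_q(\bar U_q^o-K_{q1})$; since $\bar U_q^h,\bar U_q^o$ are the minimum-norm solutions of \eqref{6} and \eqref{8}, Lemma \ref{Lemma 3} (with Lemma \ref{Lemma 2}) gives $B_q\bar U_q^h=S_q-A_q$ and $B_q\bar U_q^o=A_0-A_q$. Writing $e_{q(k)}=x_{q(k)}^l-h_{q(k)}^l-x_{(k)}^o$ and using $h_{q(k+1)}^l=S_qh_{q(k)}^l$, $x_{(k+1)}^o=A_0x_{(k)}^o$, the $h_q^l$- and $x^o$-terms telescope and one is left with
\begin{equation*}
e_{q(k+1)}=(A_q+B_qK_{q1})e_{q(k)}+(A_0-A_q-B_qK_{q1})\bigl(\hat x_{q(k)}^o-x_{(k)}^o\bigr).
\end{equation*}
Since \eqref{23} also gives $K_{q1}\in\mathcal K_q$, i.e. $\rho(A_q+B_qK_{q1})<1$, and $\hat x_{q(k)}^o-x_{(k)}^o\to\bm{0}$ by hypothesis, the variation-of-constants formula shows $e_{q(k)}\to\bm{0}$: this is the formation part of PFCC.

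\emph{Follower layer.} By Proposition \ref{Proposition 1} there is a finite $k_0$ so that for $k\ge k_0$ one has $\mathcal N_{i(k)}^F=\mathcal N_i^F$ and $\alpha_{i(k)}^q=\alpha_i^q$; hence the protocol \eqref{18} and the gains from \eqref{20}, \eqref{22} are time-invariant on $[k_0,\infty)$, and because the trajectory is bounded on $[0,k_0]$ it suffices to study the tail. Substituting \eqref{18} into \eqref{1} and using \eqref{24} together with Assumptions \ref{Assumption 5}--\ref{Assumption 6} and Lemma \ref{Lemma 3} to express $B_iK_{ih}^q$ and $B_iK_{io}$ through $S_q-A_i$ and $A_0-A_i$ (modulo $\ker B_i$), I would introduce $\xi_{i(k)}=x_{i(k)}-\sum_{q\in\mathcal N_i^F}\alpha_i^qh_{q(k)}^l-x_{(k)}^o$ --- equivalently the stationary output $C_{i(k)}\hat X_{i(k)}$ with the observations replaced by their true values --- and show, after the same cancellation of the $h_q^l$- and $x^o$-driven terms, that
\begin{equation*}
\xi_{i(k+1)}=(A_i+B_iK_{i1})\xi_{i(k)}+\varepsilon_{i(k)},
\end{equation*}
where $\varepsilon_{i(k)}$ is a fixed linear combination of the observer mismatches $\hat h_{i(k)}^q-h_{q(k)}^l$ $(q\in\mathcal N_i^F)$ and $\hat x_{i(k)}^o-x_{(k)}^o$, hence $\varepsilon_{i(k)}\to\bm{0}$. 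Since \eqref{24} forces $K_{i1}\in\mathcal K_i$, the same stability-plus-vanishing-input argument gives $\xi_{i(k)}\to\bm{0}$. Finally, using $\sum_{q=1}^M\alpha_i^q=1$ and $\alpha_i^q=0$ for $q\notin\mathcal N_i^F$,
\begin{equation*}
x_{i(k)}-\sum_{q=1}^M\alpha_i^qx_{q(k)}^l=\xi_{i(k)}-\sum_{q\in\mathcal N_i^F}\alpha_i^qe_{q(k)}\to\bm{0},
\end{equation*}
which is the containment part; together with the leader result this solves Problem \ref{Problem 1}.

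\emph{Main obstacle.} The delicate step is the algebraic bookkeeping that collapses the two closed loops to the structured recursions above: one must check that the gain identities delivered by Lemma \ref{Lemma 4} are precisely the internal-model conditions needed to annihilate the $h_{q(k)}^l$- and $x_{(k)}^o$-driven components, keeping careful track of the convex weights $\alpha_i^q$ and of the $B^+B$ projections of Lemma \ref{Lemma 3}, so that only the (vanishing) observer error survives. A minor point is to state cleanly the lemma ``$\rho(F)<1$ and $v_k\to\bm{0}$ imply $z_{k+1}=Fz_k+v_k\Rightarrow z_k\to\bm{0}$'' (split the convolution sum into a head that is uniformly small and a tail killed by $\rho(F)^k$), and to note that the finite-time switching of Algorithm \ref{Algorithm 1} is harmless because only the tail of the trajectory matters.
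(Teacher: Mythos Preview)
Your proposal is correct and follows essentially the same route as the paper: split into the leader and follower layers, use the gain identities of Lemma~\ref{Lemma 4} to collapse each closed loop to $(A_\bullet+B_\bullet K_{\bullet 1})$ acting on the natural error plus a vanishing observer-error perturbation, and conclude via Schur stability. Your write-up is in fact slightly tidier than the paper's, since you explicitly invoke Proposition~\ref{Proposition 1} for the eventual time-invariance and you close the small gap between $\xi_{i(k)}\to\bm 0$ and the actual containment requirement $x_{i(k)}-\sum_q\alpha_i^qx_{q(k)}^l\to\bm 0$ via the identity $x_{i(k)}-\sum_q\alpha_i^qx_{q(k)}^l=\xi_{i(k)}-\sum_{q\in\mathcal N_i^F}\alpha_i^qe_{q(k)}$.
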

\begin{proof}
  Define the formation error of leader $q$ as $e_{q(k)}^l = x_{q(k)}^l - h_{q(k)}^l - x_{(k)}^o, q \in \mathbb{L}$, according to \eqref{17}, we have
\begin{align}\label{58}
	e_{q(k + 1)}^l& = {A_q}x_{q(k)}^l + {B_q}u_{q(k)}^l - {S_q}h_{q(k)}^l - {A_0}x_{(k)}^o	\nonumber \\
	&=({A_q} + {B_q}K_{q1})x_{q(k)}^l - ({S_q} - {B_q}K_{qh})h_{q(k)}^l \nonumber \\
	&~~~~- ({A_0} - {B_q}K_{qo})x_{(k)}^o + {B_q}K_{qo}\tilde x_{q(k)}^o,&
\end{align}
where $\tilde x_{q(k)}^o = {\hat x_{q(k)}^o-x_{(k)}^o}$ represents the estimation error of leader $q$ to virtual node. Since that $K_{q1}, K_{qh}$ and $K_{qo}$ satisfy \eqref{23}, after simplification, one has
\begin{align}
	e_{q(k + 1)}^l=({A_q} + {B_q}K_{q1})e_{q(k)}^l + {B_q}K_{qo}\tilde x_{q(k)}^o.
\end{align}

As indicated in  Lemma \ref{Lemma 2}, $K_{q1}\in \mathcal{K}_q$, so $\mathop {\lim }\limits_{k \to \infty } {e_{qk}^l}= \bm{0}$ if $\mathop {\lim }\limits_{k \to \infty } {\hat x_{q(k)}^o-x_{(k)}^o}= \bm{0}, q \in \mathbb{L}$.

Define the containment error of follower $i$ as $e_{i(k)}^f = x_{i(k)} - \sum\limits_{q = 1}^M {\alpha _i^q} (h_{q(k)}^l + x_{(k)}^o), q \in \mathbb{L}$, according to \eqref{18}, we have
\begin{align}\label{60}
	&e_{i(k + 1)}^f= {A_i}{x_{i(k)}} + {B_i}{u_{i(k)}} - \sum\limits_{q = 1}^M {\alpha _i^q} ({S_q}h_{q(k)}^l + {A_0}x_{(k)}^o)\nonumber\\
	&~~= ({A_i} + {B_i}{K_{i1}}){x_{i(k)}} - ({A_0} - {B_i}{K_{io}})x_{(k)}^o \nonumber\\
	&~~~+ {B_i}\sum\limits_{q \in {\mathcal{N}}_{i(k)}^F} {K_{ih}^q\alpha _i^qh_{q(k)}^l}  - \sum\limits_{q = 1}^M {\alpha _i^q} {S_q}h_{q(k)}^l + {B_i}K_{io}\tilde x_{i(k)}^o \nonumber\\
	&~~~+ \sum\limits_{q \in {\mathcal{N}}_{i(k)}^F} {{B_i}K_{ih}^q(\alpha _{i(k)}^q\tilde h_{i(k)}^q}  + \tilde \alpha _{i(k)}^qh_{i(k)}^q),&
\end{align}
where $\tilde x_{i(k)}^o = {\hat x_{i(k)}^o-x_{(k)}^o}$ and $\tilde h_{i(k)}^q = \hat h_{i(k)}^q- h_{q(k)}^l$ represent the estimation errors of follower $i$ to virtual node and formation vector of leader $q$, respectively, $\tilde \alpha _{i(k)}^q=\hat \alpha _{i(k)}^q- \alpha _{i(k)}^q$. According to Definition \ref{Definition 2}, we know that $\alpha _{i}^q=0$ if $q \notin {\mathcal{N}}_{i}^F$. From Theorem \ref{Theorem 1}, ${\mathcal{N}}_{i(k)}^F={\mathcal{N}}_{i}^F$ and $\alpha _{i(k)}^q= \alpha _{i(k)}^q$ after sufficient steps. Therefore, substituting \eqref{24} into \eqref{60} yields 
\begin{align}
	e_{i(k + 1)}^f=& ({A_i} + {B_i}K_{i1})e_{i(k)}^f + {B_i}K_{io}\tilde x_{i(k)}^o \nonumber \\
	&+ \sum\limits_{q \in N_{i(k)}^F} {{B_i}K_{ih}^q\alpha _{i(k)}^q\tilde h_{i(k)}^q}  .
\end{align}

As indicated in the Lemma \ref{Lemma 2}, $K_{i1}\in \mathcal{K}_i$, so $\mathop {\lim }\limits_{k \to \infty } {e_{ik}^f}= \bm{0}$ if  $\mathop {\lim }\limits_{k \to \infty } {\hat x_{i(k)}^o-x_{(k)}^o}= \bm{0}, i \in \mathbb{F}$ and $\mathop {\lim }\limits_{k \to \infty } {\hat h_{i(k)}^{q}-h_{q(k)}^{l}}= \bm{0}, q\in {\mathcal N}_{i}^F, i \in \mathbb{F}$. The proof is completed.
\end{proof}

\begin{remark}
	The necessary condition for the regulation equation \eqref{5}-\eqref{8} to have a unique solution is that $B_i$ and $B_q$ are column full rank, indicating that the controlled system is either underactuated or fully actuated. In this case, we have ${(\bar B_m^T{P_m}{{\bar B}_m})^{ +}}={(\bar B_m^T{P_m}{{\bar B}_m})^{ -1}}, m \in \mathbb{F}\cup\mathbb{L}$ and the control gain can be replaced by the optimal control gain $- {(\bar B_m^T{P_m}{{\bar B}_m})^{ -1}}\bar B_m^T{P_m}{{\bar A}_m}$. However, for overactuated systems like the underwater vehicle system mentioned in \cite{TANAKITKORN201767}, the regulation equation has infinite solutions if any, and $\bar B_m^T{P_m}{{\bar B}_m}$ is non-invertible. Lemma \ref{Lemma 4} reveals the relationship between the designed pseudo-inverse controller and the minimum norm solution of the regulation equation, which guarantees the asymptotic convergence of the agents to the expected states. Thus, while the state PFCC problem under consideration necessitates an equal number of rows for $A_i$, $B_i$, $A_q$, $B_q$, $S_q$ and $A_0$, there are no additional requirements regarding the number of columns for $B_i$ and $B_q$.
\end{remark}

In this subsection, we have introduced a model-based control scheme for the PFCC problem. However, two key issues remain unresolved: 1) the observations used in controllers \eqref{17} and \eqref{18} are unknown, and 2) the development of a data-driven PFCC scheme. These two challenges will be addressed in Section III. B and Section III. C, respectively.

\subsection{Data-Based Distributed Adaptive Observer}
In this subsection, we develop the data-based adaptive observers for both leaders and followers to obtain the values required in controllers \eqref{17} and \eqref{18}. To remove the dependency on a system model within the observer, we integrate the concept of recursive least squares into the conventional observer framework, allowing for simultaneous estimation of both the target state and the target model matrix. The effectiveness of the proposed observers is demonstrated through Theorem \ref{Theorem 2} and Theorem \ref{Theorem 3}.

\textbf{Observer for tracking leader by formation leaders.} For formation leader $q$, the following observer is designed to estimate $x_{(k)}^o$:
\begin{numcases}{}
	{L_{q(k + 1)}} = {L_{q(k)}} - {L_{q(k)}}({\bar x _{q(k)}^o})^T\nonumber\\
	~~~~~~~~~~~~\times{(\bm{I}_{n^2} + {\bar x _{q(k)}^o}{L_{q(k)}} ({\bar x _{q(k)}^o})^T)^{ - 1}} {\bar x _{q(k)}^o}{L_{q(k)}},\label{25}\\[2mm]
	\hat A_{q,0(k + 1)}^{vec}  \nonumber\\
	~~~~~~~~=\hat A_{q,0(k)}^{vec}- {\Lambda}_l{\bar x _{q(k)}^o}{(L_{q(k+1)}^{ - 1} + \xi \bm{I}_n)^{ - 1}}{\eta _{q(k + 1)}}, \\[2mm]
	{\hat x _{q(k + 1)}^o} = \hat A_{q,0(k)}{\hat x _{q(k)}^o} - \mu_lF_l {\eta _{q(k)}},\label{27}
\end{numcases}
where ${\bar x _{q(k)}^o}={\bm{I}_n} \otimes {\hat x _{q(k)}^o}$, ${\eta _{q(k)}} = \sum\limits_{j=N+1}^{N+M} {{a_{qj}}({\hat x _{q(k)}^o} - {\hat x _{j(k)}^o})}  + {g_{q}^0({\hat x _{q(k)}^o} - { x _{(k)}^o})} $, $\xi\geq 1$, ${\Lambda}_l>0$ and $\mu_l > 0$ are constant values, $F_l$ is constant matrix.

\begin{lemma} \label{Lemma 5}
	Let the iterative formula of the adaptive parameter matrix ${L_{q(k)}}$ be given by \eqref{25} and the initial value ${L_{q(0)}}>0$. Then,
	\begin{equation}
		{L_{q(k + 1)}^{-1}}>({\bar x _{q(k)}^o})^T{\bar x _{q(k)}^o} \nonumber
	\end{equation}
	and 
	\begin{equation}
		({\bar x _{q(k)}^o})^T{\bar x _{q(k)}^o}{(L_{q(k+1)}^{ - 1} + \xi \bm{I}_n)^{ - 1}}<\frac{{{\sigma _{\max }^2}({\bar x _{q(k)}^o})}}{{\xi  + {\sigma _{\max }^2}({\bar x _{q(k)}^o})}}\bm{I}_n\nonumber
	\end{equation}
	are always valid.
\end{lemma}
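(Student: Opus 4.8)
The plan is to establish the two claimed inequalities in sequence, using the Woodbury matrix identity (Lemma \ref{Lemma 1}) as the central tool for the first inequality and then a straightforward monotonicity/scalar argument for the second. First I would rewrite the recursion \eqref{25} in its more familiar dual form. Observe that \eqref{25} is exactly the Sherman--Morrison--Woodbury update of an inverse: setting $Z = L_{q(k)}^{-1}$, $U = (\bar x_{q(k)}^o)^T$, $W = \bm{I}_{n^2}$, $V = \bar x_{q(k)}^o$, Lemma \ref{Lemma 1} gives
\begin{equation}
	L_{q(k+1)} = \bigl(L_{q(k)}^{-1} + (\bar x_{q(k)}^o)^T \bar x_{q(k)}^o\bigr)^{-1},\nonumber
\end{equation}
hence $L_{q(k+1)}^{-1} = L_{q(k)}^{-1} + (\bar x_{q(k)}^o)^T \bar x_{q(k)}^o$. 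Since $L_{q(0)}>0$, an easy induction shows $L_{q(k)}>0$ and $L_{q(k)}^{-1}>0$ for all $k$ (the update adds a positive semidefinite term to a positive definite one, and $L_{q(k+1)}$ stays invertible because its inverse is positive definite). Then the first claim is immediate: $L_{q(k+1)}^{-1} = L_{q(k)}^{-1} + (\bar x_{q(k)}^o)^T \bar x_{q(k)}^o > (\bar x_{q(k)}^o)^T \bar x_{q(k)}^o$, the strict inequality coming from $L_{q(k)}^{-1}>0$.

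For the second inequality I would argue as follows. From the first part, $L_{q(k+1)}^{-1} + \xi \bm{I}_n > (\bar x_{q(k)}^o)^T \bar x_{q(k)}^o + \xi \bm{I}_n$, and inverting this order-reversing relation on positive definite matrices gives $(L_{q(k+1)}^{-1} + \xi \bm{I}_n)^{-1} < \bigl((\bar x_{q(k)}^o)^T \bar x_{q(k)}^o + \xi \bm{I}_n\bigr)^{-1}$. Now note the key structural fact that $\bar x_{q(k)}^o = \bm{I}_n \otimes \hat x_{q(k)}^o$, so $(\bar x_{q(k)}^o)^T \bar x_{q(k)}^o = \bm{I}_n \otimes \bigl((\hat x_{q(k)}^o)^T \hat x_{q(k)}^o\bigr) = \|\hat x_{q(k)}^o\|^2 \bm{I}_n = \sigma_{\max}^2(\bar x_{q(k)}^o)\,\bm{I}_n$ (the nonzero singular values of a Kronecker product $\bm{I}_n \otimes v$ all equal $\|v\|$). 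Therefore $(\bar x_{q(k)}^o)^T \bar x_{q(k)}^o + \xi \bm{I}_n = (\xi + \sigma_{\max}^2(\bar x_{q(k)}^o))\bm{I}_n$ is a scalar multiple of the identity, its inverse is $\tfrac{1}{\xi + \sigma_{\max}^2(\bar x_{q(k)}^o)}\bm{I}_n$, and multiplying the displayed inequality on the left by the positive semidefinite matrix $(\bar x_{q(k)}^o)^T \bar x_{q(k)}^o = \sigma_{\max}^2(\bar x_{q(k)}^o)\bm{I}_n$ preserves the order, yielding exactly the claimed bound
\begin{equation}
	(\bar x_{q(k)}^o)^T \bar x_{q(k)}^o \,(L_{q(k+1)}^{-1} + \xi \bm{I}_n)^{-1} < \frac{\sigma_{\max}^2(\bar x_{q(k)}^o)}{\xi + \sigma_{\max}^2(\bar x_{q(k)}^o)}\,\bm{I}_n.\nonumber
\end{equation}

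The only mild subtlety — and the step I would be most careful about — is the manipulation of matrix inequalities under left-multiplication by $(\bar x_{q(k)}^o)^T \bar x_{q(k)}^o$: in general $X < Y$ does not imply $MX < MY$ for arbitrary positive semidefinite $M$, but here $M = \sigma_{\max}^2(\bar x_{q(k)}^o)\bm{I}_n$ is a nonnegative scalar multiple of the identity, so the implication is valid (and if $\hat x_{q(k)}^o = \bm{0}$ the inequality degenerates to $\bm{0} < \bm{0}$, which should be read as $\leq$, or one simply notes $\bar x_{q(k)}^o$ is generically nonzero along trajectories of interest). Aside from this, all steps are routine applications of Lemma \ref{Lemma 1}, the order-reversing property of matrix inversion on the positive definite cone, and elementary properties of the Kronecker product; the induction establishing $L_{q(k)}>0$ I would state but not belabor.
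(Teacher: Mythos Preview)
Your proof is correct and the first half matches the paper exactly: both apply the Woodbury identity (Lemma~\ref{Lemma 1}) to rewrite \eqref{25} as $L_{q(k+1)}^{-1}=L_{q(k)}^{-1}+(\bar x_{q(k)}^o)^T\bar x_{q(k)}^o$, from which the first inequality is immediate.

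For the second inequality your route differs from the paper's. You exploit the Kronecker structure $\bar x_{q(k)}^o=\bm{I}_n\otimes\hat x_{q(k)}^o$ to observe that $(\bar x_{q(k)}^o)^T\bar x_{q(k)}^o=\|\hat x_{q(k)}^o\|^2\bm{I}_n$ is a scalar multiple of the identity, after which the bound follows by a one-line monotonicity-of-inverse argument. The paper instead sets $\bar U=(\bar x_{q(k)}^o)^T\bar x_{q(k)}^o$, passes from $(L_{q(k+1)}^{-1}+\xi\bm{I}_n)^{-1}$ to $(\bar U+\xi\bm{I}_n)^{-1}$, then invokes Woodbury a second time to expand $\bar U(\bar U+\xi\bm{I}_n)^{-1}=\bm{I}_n-(\bar U/\xi+\bm{I}_n)^{-1}$ and bounds this via $\lambda_{\max}(\bar U)$. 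Your approach is shorter and, notably, more carefully justified: the paper's intermediate step $\bar U(L_{q(k+1)}^{-1}+\xi\bm{I}_n)^{-1}<\bar U(\bar U+\xi\bm{I}_n)^{-1}$ is in general not valid for arbitrary positive semidefinite $\bar U$ and tacitly relies on the very scalar-identity fact you made explicit. The paper's calculation would, however, generalize to settings where $\bar U$ is merely positive definite but not scalar (once the left-multiplication step is handled properly), whereas your argument is tailored to the specific Kronecker form at hand.
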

\begin{proof}
According to Lemma \ref{Lemma 1}, let $Z=L_{q(k)}$, $U=({\bar x _{q(k)}^o})^T$, $W=\bm{I}_{n^2}$, $V={\bar x _{q(k)}^o}$, one has 
\begin{equation}
	L_{q(k+1)}=(L_{q(k)}^{-1}+({\bar x _{q(k)}^o})^T{\bar x _{q(k)}^o})^{-1}.
\end{equation}

Then, we have $L_{q(k+1)}^{-1}=L_{q(k)}^{-1}+({\bar x _{q(k)}^o})^T{\bar x _{q(k)}^o}>{\bar x _{q(k)}^o})^T{\bar x _{q(k)}^o}$. Thus, $\bar U{(L_{q(k+1)}^{ - 1} + \xi \bm{I}_n)^{ - 1}}<\bar U{(\bar U + \xi \bm{I}_n)^{ - 1}}$ where $\bar U=({\bar x _{q(k)}^o})^T{\bar x _{q(k)}^o}$. Call Lemma \ref{Lemma 1} again, we have $\bar U{(\bar U + \xi \bm{I}_n)^{ - 1}}=\bar U({\bar U^{ - 1}} - {\bar U^{ - 1}}{(\bar U{(\xi \bm{I}_n)^{ - 1}} + \bm{I}_n)^{ - 1}})=\bm{I}_n - {(\bar U{(\xi \bm{I}_n)^{ - 1}} + \bm{I}_n)^{ - 1}}<\bm{I}_n - {({\lambda _{\max }}(\bar U){(\xi \bm{I}_n)^{ - 1}} + \bm{I}_n)^{ - 1}}=\frac{{{\lambda _{\max }}(\bar U)}}{{\xi  + {\lambda _{\max }}(\bar U)}}\bm{I}_n=\frac{{{\sigma _{\max }^2}({\bar x _{q(k)}^o})}}{{\xi  + {\sigma _{\max }^2}({\bar x _{q(k)}^o})}}\bm{I}_n\nonumber$. The proof is completed. 
\end{proof} 

Denote ${\hat x _{l(k)}^o}=col({\hat x _{N+1(k)}^o},...,{\hat x _{N+M(k)}^o})$ and ${\mathcal{H}}=diag(g_{N+1}^0,...,g_{N+M}^0)$. Then according to the designed observer, the global estimation error ${\tilde x _{l(k)}^o}={\hat x _{l(k)}^o}-\bm{1}_M\otimes x_{(k)}^o$ of leaders can be transformed into
\begin{equation}
	{\tilde x _{l(k+1)}^o}= S_l{\tilde x _{l(k)}^o}+ {\zeta _{l(k)}^T}\tilde A_{l(k)}^{vec},
\end{equation}
where  $\tilde A_{l(k)}^{vec}$ is the vector representation of $diag(\tilde A_{N+1,0(k)},...,\tilde A_{N+M,0(k)})$ with $\tilde A_{q,0(k)}=\hat A_{q,0(k)}-A_0$, $S_l ={\bm{I}_M} \otimes A_0 - {\mu}_l({{\mathcal L}_3}+{\mathcal{H}}) \otimes F_l$ and  ${\zeta _{l(k)}}=diag({\bar x _{N+1(k)}^o},...,{\bar x _{N+M(k)}^o})$.  According to Lemma 3 of \cite{KIUMARSI201786}, we can choose suitable $\mu_l$ and $F_l$ to ensure that $S_l$ is Schur matrix. The following theorem provides a range of values for the parameter $\Lambda_l$ with respect to the observation error achieving asymptotic convergence. 
\begin{theorem} \label{Theorem 2}
	Consider the heterogeneous MAS \eqref{1}-\eqref{4} with Assumptions \ref{Assumption 1} holds. Then, the estimation error $\mathop {\lim }\limits_{k \to \infty } \tilde x_{q(k)}^o = {\hat x_{q(k)}^o-x_{(k)}^o}=\bm{0}$ for $\forall q \in \mathbb{L} $ if  the coupling gain ${\Lambda}_l$ satisfies 
	\begin{equation}
		0 < \Lambda_l  \le \frac{{{\lambda _{\min }}(W_l - S_l^TW_l{S_l})}}{{\sigma _{\max }^2({{\mathcal L}_3}+{\mathcal{H}}){\Gamma _l}}},
	\end{equation}
	where $W_l=\dfrac{1}{2}{{\bar L}_l}(({{\mathcal L}_3}+{\mathcal{H}})\otimes {\bm{I}_n}) + \dfrac{1}{2}(({{\mathcal L}_3}+{\mathcal{H}})\otimes {\bm{I}_n})^T\bar L_l$ with ${{\bar L}_l}=diag({(L_{q(k+1)}^{ - 1} + \xi \bm{I}_n)^{ - 1}})$ and $\Gamma_l=\frac{{{\sigma _{\max }^2}({\zeta _{l(k)}})}}{{\xi  + {\sigma _{\max }^2}({\zeta _{l(k)}})}}$.
\end{theorem}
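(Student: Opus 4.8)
# Proof Proposal for Theorem 2

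\textbf{Overall strategy.} The plan is to construct a composite Lyapunov function that simultaneously captures the observation error $\tilde{x}_{l(k)}^o$ and the parameter estimation error $\tilde{A}_{l(k)}^{vec}$, then show it is non-increasing along the trajectories of the coupled error system. The natural candidate is $V_{(k)} = (\tilde{x}_{l(k)}^o)^T W_l \tilde{x}_{l(k)}^o + \|\tilde{A}_{l(k)}^{vec}\|^2$ (or a suitably weighted version), where $W_l$ is the matrix defined in the statement. Since $S_l$ is Schur (by the choice of $\mu_l, F_l$ via Lemma 3 of \cite{KIUMARSI201786}) there exists a positive definite $W_l$ with $W_l - S_l^T W_l S_l > 0$; the point is to verify that the specific $W_l$ built from $\bar{L}_l$ and $(\mathcal{L}_3+\mathcal{H})$ does the job, which ties the parameter-update law \eqref{25}--\eqref{27} to the quadratic form.

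\textbf{Key steps in order.} First I would write down the joint error dynamics: the state error obeys $\tilde{x}_{l(k+1)}^o = S_l \tilde{x}_{l(k)}^o + \zeta_{l(k)}^T \tilde{A}_{l(k)}^{vec}$ (already derived in the excerpt), and I would derive the companion recursion for $\tilde{A}_{l(k)}^{vec}$ from the update equation, obtaining something of the form $\tilde{A}_{l(k+1)}^{vec} = \tilde{A}_{l(k)}^{vec} - \Lambda_l \zeta_{l(k)}(\bar{L}_l\text{-weighted})\,\eta_{l(k+1)}$, and crucially express $\eta_{l(k+1)}$ (the global stacked innovation) in terms of $((\mathcal{L}_3+\mathcal{H})\otimes I_n)\tilde{x}_{l(k+1)}^o$. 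Second, I would compute $\Delta V_{(k)} = V_{(k+1)} - V_{(k)}$. The state part contributes $-(\tilde{x}_{l(k)}^o)^T(W_l - S_l^T W_l S_l)\tilde{x}_{l(k)}^o$ plus cross terms in $\zeta_{l(k)}^T\tilde{A}_{l(k)}^{vec}$; the parameter part contributes a negative term from the RLS-type update plus matching cross terms. Third — and this is where the hypotheses enter — I would show the cross terms cancel exactly (this is the reason $W_l$ is built from $\bar{L}_l$ and $(\mathcal{L}_3+\mathcal{H})$: the update gain is designed so that the gradient cross-term from the quadratic form is annihilated by the innovation term). Fourth, I would bound the residual quadratic term in $\tilde{A}_{l(k)}^{vec}$: using Lemma \ref{Lemma 5}, the factor $\zeta_{l(k)}^T\zeta_{l(k)}(L^{-1}+\xi I)^{-1}$-type expression is bounded by $\Gamma_l I$, and combining with $\sigma_{\max}^2(\mathcal{L}_3+\mathcal{H})$ yields that the net coefficient is non-positive precisely when $\Lambda_l \le \lambda_{\min}(W_l - S_l^T W_l S_l)/(\sigma_{\max}^2(\mathcal{L}_3+\mathcal{H})\Gamma_l)$, i.e.\ the stated bound. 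Hence $\Delta V_{(k)} \le -c\|\tilde{x}_{l(k)}^o\|^2$ for some $c>0$.

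\textbf{Concluding the limit.} From $\Delta V_{(k)} \le -c\|\tilde{x}_{l(k)}^o\|^2 \le 0$, the sequence $V_{(k)}$ is non-increasing and bounded below by zero, hence convergent; telescoping gives $\sum_{k=0}^\infty \|\tilde{x}_{l(k)}^o\|^2 < \infty$, so $\tilde{x}_{l(k)}^o \to \bm{0}$, and since $\tilde{x}_{l(k)}^o$ stacks all the $\tilde{x}_{q(k)}^o$, each component tends to zero, which is exactly the claim for $\forall q \in \mathbb{L}$. I would note that $\tilde{A}_{l(k)}^{vec}$ need only be shown bounded (it converges to a possibly nonzero limit, as is typical for adaptive observers without persistent excitation), which is automatic from boundedness of $V_{(k)}$; consequently $\zeta_{l(k)}$ stays bounded and the argument is self-consistent.

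\textbf{Main obstacle.} The delicate part will be step three — verifying the exact cancellation of the cross terms. This requires carefully matching the Kronecker-structured gain $\bar{L}_l = \mathrm{diag}((L_{q(k+1)}^{-1}+\xi I_n)^{-1})$ appearing implicitly in the $\tilde{A}$-update against the weight $W_l$ in the Lyapunov function, and handling the one-step index shift (the update uses $\eta_{q(k+1)}$, evaluated after the state has moved). One must also confirm that $W_l$ as defined is genuinely positive definite and that $W_l - S_l^T W_l S_l > 0$ given only that $S_l$ is Schur — this may need the observation that $\bar{L}_l$ is uniformly positive definite and uniformly bounded (again via Lemma \ref{Lemma 5} and $L_{q(0)}>0$), so the time-varying $W_l$ stays in a compact set of positive definite matrices; a uniform $\lambda_{\min}$ then exists. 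Everything else is routine completion-of-squares and telescoping.
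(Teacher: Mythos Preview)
Your proposal is essentially the same Lyapunov argument the paper carries out; the paper's candidate is $V_{(k)} = (\tilde x_{l(k)}^o)^T W_l \tilde x_{l(k)}^o + (\tilde A_{l(k)}^{vec})^T \Lambda_l^{-1}\tilde A_{l(k)}^{vec}$, so the ``suitably weighted'' version you anticipated is precisely a $\Lambda_l^{-1}$ weight on the parameter error. One refinement: the specific choice of $W_l$ cancels the \emph{primary} cross term $2(\tilde A_{l(k)}^{vec})^T\zeta_{l(k)}W_l S_l \tilde x_{l(k)}^o$ against the update-induced term, but a residual $O(\Lambda_l)$ cross term survives and is absorbed via Young's inequality together with the bound $\bar L_l < I$ and Lemma~\ref{Lemma 5}; this is where the factor $\Gamma_l$ and the extra $\sigma_{\max}^2(\mathcal L_3+\mathcal H)$ in the denominator of the $\Lambda_l$ bound come from, rather than from a purely exact cancellation.
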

\begin{proof}
Consider the Lyapunov function 
\begin{equation}
	{V_{(k)}} = ({\tilde x _{l(k)}^o})^TW_l{\tilde x _{l(k)}^o} + (\tilde A_{l(k)}^{vec})^T{\Lambda_l ^{ - 1}}\tilde A_{l(k)}^{vec}.
\end{equation}

Using \eqref{25}-\eqref{27} and Lemma \ref{Lemma 5}, we have 
\begin{align}
	&\Delta {V_{(k)}}= - ({\tilde x _{l(k)}^o})^T(W_l-S_l^TW_lS_l){\tilde x _{l(k)}^o}\nonumber \\
	&~~~~~+\Lambda_l({\tilde x _{l(k)}^o})^TS_l^T(({{\mathcal L}_3}+{\mathcal{H}})\otimes {\bm{I}_n})^T\bar L_l^T{\zeta _{l(k)}^T}\nonumber \\
	&~~~~~~~~\times{\zeta _{l(k)}}{{\bar L}_l}(({{\mathcal L}_3}+{\mathcal{H}})\otimes {\bm{I}_n})S_l{\tilde x _{l(k)}^o}\nonumber \\
	&~+2(\tilde A_{l(k)}^{vec})^T{\zeta _{l(k)}}W_lS_l{\tilde x _{l(k)}^o}+(\tilde A_{l(k)}^{vec})^T{\zeta _{l(k)}}W_l{\zeta _{l(k)}^T}\tilde A_{l(k)}^{vec}\nonumber \\
	&~~~~~-(\tilde A_{l(k)}^{vec})^T{\zeta _{l(k)}}{{\bar L}_l}(({{\mathcal L}_3}+{\mathcal{H}})\otimes {\bm{I}_n}){\zeta _{l(k)}^T}\tilde A_{l(k)}^{vec}\nonumber \\
	&~~~~~-(\tilde A_{l(k)}^{vec})^T{\zeta _{l(k)}}(({{\mathcal L}_3}+{\mathcal{H}})\otimes {\bm{I}_n})^T{{\bar L}_l}{\zeta _{l(k)}^T}\tilde A_{l(k)}^{vec}\nonumber \\
	&~~~~~-(\tilde A_{l(k)}^{vec})^T{\zeta _{l(k)}}({{\bar L}_l}(({{\mathcal L}_3}+{\mathcal{H}})\otimes {\bm{I}_n})\nonumber \\
	&~~~~~~~~ + (({{\mathcal L}_3}+{\mathcal{H}})\otimes {\bm{I}_n})^T\bar L_l^T)S_l{\tilde x _{l(k)}^o}\nonumber \\
	&~~~~~+\Lambda_l(\tilde A_{l(k)}^{vec})^T {\zeta _{l(k)}}(({{\mathcal L}_3}+{\mathcal{H}})\otimes {\bm{I}_n})^T\bar L_l^T{\zeta _{l(k)}^T}\nonumber \\
	&~~~~~~~~\times{\zeta _{l(k)}}{{\bar L}_l}(({{\mathcal L}_3}+{\mathcal{H}})\otimes {\bm{I}_n}){\zeta _{l(k)}^T}\tilde A_{l(k)}^{vec}\nonumber \\
	&~~~~~+2\Lambda_l (\tilde A_{l(k)}^{vec})^T{\zeta _{l(k)}}(({{\mathcal L}_3}+{\mathcal{H}})\otimes {\bm{I}_n})^T\bar L_l^T{\zeta _{l(k)}^T}\nonumber \\
	&~~~~~~~~\times{\zeta _{l(k)}}{{\bar L}_l}(({{\mathcal L}_3}+{\mathcal{H}})\otimes {\bm{I}_n})S_l{\tilde x _{l(k)}^o}.
\end{align}

Let  $W_l=\dfrac{1}{2}{{\bar L}_l}(({{\mathcal L}_3}+{\mathcal{H}})\otimes {\bm{I}_n}) + \dfrac{1}{2}(({{\mathcal L}_3}+{\mathcal{H}})\otimes {\bm{I}_n})^T\bar L_l$, then we have
\begin{align}
	&\Delta {V_{(k)}}= - ({\tilde x _{l(k)}^o})^T(W_l-S_l^TW_lS_l){\tilde x _{l(k)}^o}\nonumber \\
	&~~~~~+\Lambda_l({\tilde x _{l(k)}^o})^TS_l^T(({{\mathcal L}_3}+{\mathcal{H}})\otimes {\bm{I}_n})^T\bar L_l^T{\zeta _{l(k)}^T}\nonumber \\
	&~~~~~~~~\times{\zeta _{l(k)}}{{\bar L}_l}(({{\mathcal L}_3}+{\mathcal{H}})\otimes {\bm{I}_n})S_l{\tilde x _{l(k)}^o}\nonumber \\
	&~~~~~-(\tilde A_{l(k)}^{vec})^T{\zeta _{l(k)}}W_l{\zeta _{l(k)}^T}\tilde A_{l(k)}^{vec}\nonumber \\
	&~~~~~+\Lambda_l(\tilde A_{l(k)}^{vec})^T {\zeta _{l(k)}}(({{\mathcal L}_3}+{\mathcal{H}})\otimes {\bm{I}_n})^T\bar L_l^T{\zeta _{l(k)}^T}\nonumber \\
	&~~~~~~~~\times{\zeta _{l(k)}}{{\bar L}_l}(({{\mathcal L}_3}+{\mathcal{H}})\otimes {\bm{I}_n}){\zeta _{l(k)}^T}\tilde A_{l(k)}^{vec}\nonumber \\
	&~~~~~+2\Lambda_l (\tilde A_{l(k)}^{vec})^T{\zeta _{l(k)}}(({{\mathcal L}_3}+{\mathcal{H}})\otimes {\bm{I}_n})^T\bar L_l^T{\zeta _{l(k)}^T}\nonumber \\
	&~~~~~~~~\times{\zeta _{l(k)}}{{\bar L}_l}(({{\mathcal L}_3}+{\mathcal{H}})\otimes {\bm{I}_n})S_l{\tilde x _{l(k)}^o}.
\end{align}

By using Lemma \ref{Lemma 5}, Young’s inequality and the fact of ${{\bar L}_l}<\bm{I}_{nM}$, one has
\begin{align}
	&\Delta {V_{(k)}}< - ({\tilde x _{l(k)}^o})^T(W_l-S_l^TW_lS_l\nonumber \\
	&-2\Lambda_l \Gamma_l (({{\mathcal L}_3}+{\mathcal{H}})\otimes {\bm{I}_n})^T(({{\mathcal L}_3}+{\mathcal{H}})\otimes {\bm{I}_n})){\tilde x _{l(k)}^o}\nonumber \\
	&-(\tilde A_{l(k)}^{vec})^T{\zeta _{l(k)}}(W_l-2\Lambda_l\Gamma_l(({{\mathcal L}_3}+{\mathcal{H}})\otimes {\bm{I}_n})^T\nonumber \\
	&~~~~\times(({{\mathcal L}_3}+{\mathcal{H}})\otimes {\bm{I}_n})){\zeta _{l(k)}^T}\tilde A_{l(k)}^{vec},
\end{align}
where $\Gamma_l=\frac{{{\sigma _{\max }^2}({\zeta _{l(k)}})}}{{\xi  + {\sigma _{\max }^2}({\zeta _{l(k)}})}}$. Thus, $\Delta {V_{(k)}}<0$ if the following inequalities hold:
\begin{numcases}{}
	W_l-S_l^TW_lS_l \nonumber\\
	~~~-2\Lambda_l \Gamma_l (({{\mathcal L}_3}+{\mathcal{H}})\otimes {\bm{I}_n})^T(({{\mathcal L}_3}+{\mathcal{H}})\otimes {\bm{I}_n})\geq0,\nonumber\\[2mm]
	W_l-2\Lambda_l\Gamma_l(({{\mathcal L}_3}+{\mathcal{H}})\otimes {\bm{I}_n})^T(({{\mathcal L}_3}+{\mathcal{H}})\otimes {\bm{I}_n})\geq0\nonumber	.
\end{numcases}
\indent
Therefore, the above conditions are  satisfied if 
\begin{equation}
	0 < \Lambda  \le \frac{{{\lambda _{\min }}(W_l - S_l^TW_l{S_l})}}{{\sigma _{\max }^2({{\mathcal L}_3}+{\mathcal{H}}){\Gamma _l}}},
\end{equation}
and the estimation error $\mathop {\lim }\limits_{k \to \infty } \tilde x_{q(k)}^o = {\hat x_{q(k)}^o-x_{(k)}^o}=\bm{0}$ for $\forall q \in \mathbb{L} $. The proof is completed.
\end{proof}

\textbf{Observer for tracking leader by followers.} For follower $i$, the following observer is designed to estimate $x_{(k)}^o$:
\begin{numcases}{}
	{L_{i(k + 1)}^o} = {L_{i(k)}^o} - {L_{i(k)}^o}({\bar x _{i(k)}^o})^T\nonumber\\
	~~~~~~~~~~~~\times{(\bm{I}_{n^2} + {\bar x _{i(k)}^o}{L_{i(k)}^o} ({\bar x _{i(k)}^o})^T)^{ - 1}} {\bar x _{i(k)}^o}{L_{i(k)}^o},\label{30}\\[2mm]
	\hat A_{i,0(k + 1)}^{vec} \nonumber\\
	~~~~~~= \hat A_{i,0(k)}^{vec}- {\Lambda}_f^o{\bar x _{i(k)}^o}{(L_{i(k+1)}^{ - 1} + \xi \bm{I}_n)^{ - 1}}{\eta _{i(k + 1)}^o},  \\[2mm]
	{\hat x _{i(k + 1)}^o} = \hat A_{i,0(k)}{\hat x _{i(k)}^o} - \mu_f^oF_f^o {\eta _{i(k)}^o},\label{32}
\end{numcases}
where  ${\bar x _{i(k)}^o}={\bm{I}_n} \otimes {\hat x _{i(k)}^o}$, ${\eta _{i(k)}^o} = \sum\limits_{j=1}^{N} {{a_{ij}}({\hat x _{i(k)}^o} - {\hat x _{j(k)}^o})}  + \sum\limits_{j = N+1}^{N+M} {g_i^j ({\hat x _{i(k)}^o} - {\hat x _{j(k)}^o})} $, $\xi\geq 1$, ${\Lambda}_f^o>0$ and $\mu_f^o > 0$ are constant values, $F_f^o$ is constant matrix.
\\ \indent
Denote ${\hat x _{f(k)}^o}=col({\hat x _{1(k)}^o},...,{\hat x _{N(k)}^o})$. Since $\mathop {\lim }\limits_{k \to \infty }{\hat x _{l(k)}^o}=\bm{1}_M\otimes x_{(k)}^o$ in Theorem \ref{Theorem 2} and the row sum of ${\mathcal{L}_1^{-1}}{\mathcal{L}_2}\otimes \bm{I}_n$ is $-1$ . Then the global estimation error ${\tilde x _{f(k)}^o}={\hat x _{f(k)}^o}-\bm{1}_M\otimes x_{(k)}^o={\hat x _{f(k)}^o}+ ({\mathcal{L}_1^{-1}}{\mathcal{L}_2}\otimes \bm{I}_n)\hat x_{l(k)}^o$ can be transformed into
\begin{equation}
	{\tilde x _{f(k+1)}^o}= S_f^o{\tilde x _{f(k)}^o}+ {\zeta _{f(k)}^{o^T}}\tilde A_{f0(k)}^{vec},
\end{equation}
where  $\tilde A_{f0(k)}^{vec}$ is the vector representation of $diag(\tilde A_{1,0(k)},...,\tilde A_{N,0(k)})$ with $\tilde A_{i,0(k)}=\hat A_{i,0(k)}-A_0$, $S_f^o ={\bm{I}_N} \otimes A_0 - {\mu}_f^o{{\mathcal L}_1} \otimes F_f^o$ and  ${\zeta _{f(k)}^o}=diag({\bar x _{1(k)}^o},...,{\bar x _{N(k)}^o})$.

\textbf{Observer for formation state by agents.} Due to the presence of ITFLs, not only followers need to observe $h_{q(k)}^l$, but ITFLs also need to observe to relay information. Thus for agent $m$, the following observer is designed to estimate $h_{q(k)}^l,q\in\mathcal{N}_{m(k)}^F$ or $q\in\mathcal{N}_{m(k)}^L$:
\begin{numcases}{}
	{L_{m(k + 1)}^q} = {L_{m(k)}^q} - {L_{m(k)}^q}({\bar h _{m(k)}^q})^T\nonumber\\
	~~~~~~~~\times{(\bm{I}_{n^2} + {\bar h _{m(k)}^q}{L_{m(k)}^q} ({\bar h _{m(k)}^q})^T)^{ - 1}} {\bar h _{m(k)}^q}{L_{m(k)}^q},\label{34}\\[2mm]
	\hat A_{m,q(k + 1)}^{vec} \nonumber\\
	~~~~= \hat A_{m,q(k)}^{vec}- {\Lambda}_f^q{\bar h _{m(k)}^q}{(L_{m(k+1)}^{q^ {-1}} + \xi \bm{I}_n)^{ - 1}}{\eta _{m(k + 1)}^q},  \\[2mm]
	{\hat h _{m(k + 1)}^q} = \hat A_{m,q(k)}{\hat h _{m(k)}^q} - \mu_f^qF_f^q {\eta _{m(k)}^q},\label{36}
\end{numcases}
where  ${\bar h _{m(k)}^q}={\bm{I}_n} \otimes {\hat h _{m(k)}^q}$, $\eta _{m(k)}^q = \sum\limits_{q \ne j = N+1}^{N+M} {{a_{mj}}{{\bar a}_{jq(k)}}(\hat h_{m(k)}^q - \hat h_{j(k)}^q)}  + g_m^q(\hat h_{m(k)}^q - h_{q(k)}^l)$ if agent $m$ is formation leader and $\eta _{m(k)}^q = \sum\limits_{ j = N+1}^{N+M} {g_m^j{{\bar a}_{jq(k)}}(\hat h_{m(k)}^q - \hat h_{j(k)}^q)}  + \sum\limits_{q \ne j = 1}^{ N} {{a_{mj}}\bar g_{j(k)}^q(\hat h_{m(k)}^q - \hat h_{j(k)}^q)}  + g_m^q(\hat h_{m(k)}^q - h_{q(k)}^l)$ if agent $m$ is follower, where $\bar g_{j(k)}^q$ and ${{\bar a}_{jq(k)}}$ are obtained by Algorithm \ref{Algorithm 1}, $\xi\geq 1$, ${\Lambda}_f^q>0$ and $\mu_f^q > 0$ are constant values. 

In the definition of $\eta _{m(k)}^q$, the agent $m$ checks whether its neighbors are influenced by the formation leader $q$ (by multiplying incoming data by $\bar g_{j(k)}^q$ or ${{\bar a}_{jq(k)}}$) before receiving information. In other words, only the directed graph with leader $q$ as the root node and the agents influenced by it as the child nodes are retained. After sufficient steps, denote the vector combination of the observations and the Laplacian matrix of these child nodes as ${\hat h _{f(k)}^q}=col({\hat h _{m(k)}^q})$ and $\mathcal{L}^q\in \mathbb{R}^{v_q\times v_q}$, respectively, where $v_q$ represents the number of agents influenced by formation leader $q$. Let ${\mathcal{H}}^q=diag(a_{mq})$. Then the global estimation error ${\tilde h _{f(k)}^q}={\hat h _{f(k)}^q}-\bm{1}_{v_q}\otimes h_{q(k)}^l$ can be transformed into
\begin{equation}
	{\tilde h _{f(k+1)}^q}= S_f^q{\tilde h _{f(k)}^q}+ {\zeta _{f(k)}^{q^T}}\tilde A_{fq(k)}^{vec},
\end{equation}
where  $\tilde A_{fq(k)}^{vec}$ is the vector representation of $diag(\tilde A_{m,q(k)})$ with $\tilde A_{m,q(k)}=\hat A_{m,q(k)}-S_q$, $S_f^q ={\bm{I}_{v_q}} \otimes S_q - {\mu}_f^q{{\mathcal L}^q} \otimes F_f^q$ and  ${\zeta _{f(k)}^q}=diag({\bar h _{m(k)}^q})$. Similarly, according to Lemma 3 of \cite{KIUMARSI201786}, we can choose suitable $\mu_f^o$, $\mu_f^q$, $F_f^o$ and $F_f^q$ to ensure that $S_f^o$ and $S_f^q$ are Schur matrices.
\begin{theorem} \label{Theorem 3}
	Consider the heterogeneous MAS \eqref{1}-\eqref{4} with Assumptions \ref{Assumption 1} holds. Then, the estimation error $\mathop {\lim }\limits_{k \to \infty } \tilde x_{i(k)}^o = {\hat x_{i(k)}^o-x_{(k)}^o}=\bm{0}$ for $\forall i \in \mathbb{F}$ and $\mathop {\lim }\limits_{k \to \infty } \tilde h_{i(k)}^{q}={\hat h_{i(k)}^{q}-h_{q(k)}^{l}}= \bm{0}$ for $\forall  q\in {\mathcal N}_{i}^F, i \in \mathbb{F}$ if  the coupling gains ${\Lambda}_f^o$ and ${\Lambda}_f^q$ satisfy 
	\begin{numcases}{}
		0 < \Lambda_f^o  \le \frac{{{\lambda _{\min }}(W_f^o - S_f^{o^T}W_f^o{S_f^o})}}{{\sigma _{\max }^2({{\mathcal L}_1}){\Gamma _f^o}}},\\
		0 < \Lambda_f^q  \le \frac{{{\lambda _{\min }}(W_f^q - S_f^{q^T}W_f^q{S_f^q})}}{{\sigma _{\max }^2({{\mathcal L}^q}+{{\mathcal H}^q}){\Gamma _f^o}}},
	\end{numcases}
	where $W_f^o=\dfrac{1}{2}{{\bar L}_f^o}({{\mathcal L}_1}\otimes {\bm{I}_n}) + \dfrac{1}{2}({{\mathcal L}_1}\otimes {\bm{I}_n})^T\bar L_f^o$ with ${{\bar L}_f^o}=diag({(L_{i(k+1)}^{ - 1} + \xi \bm{I}_n)^{ - 1}})$ and $\Gamma_f^o=\frac{{{\sigma _{\max }^2}({\zeta _{f(k)}^o})}}{{\xi  + {\sigma _{\max }^2}({\zeta _{f(k)}^o})}}$, $W_f^q=\dfrac{1}{2}{{\bar L}_f^q}(({{\mathcal L}^q}+{{\mathcal H}^q})\otimes {\bm{I}_n}) + \dfrac{1}{2}(({{\mathcal L}^q}+{{\mathcal H}^q})\otimes {\bm{I}_n})^T\bar L_f^q$ with ${{\bar L}_f^q}=diag({(L_{m(k+1)}^{q^ {- 1}} + \xi \bm{I}_n)^{ - 1}})$ and $\Gamma_f^q=\frac{{{\sigma _{\max }^2}({\zeta _{f(k)}^q})}}{{\xi  + {\sigma _{\max }^2}({\zeta _{f(k)}^q})}}$.
\end{theorem}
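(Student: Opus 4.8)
The plan is to prove the two limits by replaying, for each of the two observer error systems, the Lyapunov-plus-recursive-least-squares argument already used in the proof of Theorem \ref{Theorem 2}. I would begin with the followers' estimates of the tracking leader. First I would record the analogue of Lemma \ref{Lemma 5} for \eqref{30}: applying Lemma \ref{Lemma 1} gives $L_{i(k+1)}^{o,-1}=L_{i(k)}^{o,-1}+(\bar x_{i(k)}^o)^T\bar x_{i(k)}^o>(\bar x_{i(k)}^o)^T\bar x_{i(k)}^o$, and since $\xi\ge 1$ this yields $\bar L_f^o<\bm I_{nN}$ and $(\bar x_{i(k)}^o)^T\bar x_{i(k)}^o(L_{i(k+1)}^{o,-1}+\xi\bm I_n)^{-1}<\Gamma_f^o\bm I_n$. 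By Theorem \ref{Theorem 2} we already know $\hat x_{l(k)}^o\to\bm 1_M\otimes x_{(k)}^o$; since the row sums of $\mathcal L_1^{-1}\mathcal L_2$ equal $-1$, the error $\tilde x_{f(k)}^o=\hat x_{f(k)}^o-\bm 1_N\otimes x_{(k)}^o$ equals $\hat x_{f(k)}^o+(\mathcal L_1^{-1}\mathcal L_2\otimes\bm I_n)\hat x_{l(k)}^o$ up to a term that vanishes as $k\to\infty$, so the error obeys $\tilde x_{f(k+1)}^o=S_f^o\tilde x_{f(k)}^o+\zeta_{f(k)}^{o^T}\tilde A_{f0(k)}^{vec}$ asymptotically. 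I would then take $V_{(k)}^o=(\tilde x_{f(k)}^o)^TW_f^o\tilde x_{f(k)}^o+(\tilde A_{f0(k)}^{vec})^T(\Lambda_f^o)^{-1}\tilde A_{f0(k)}^{vec}$ and expand $\Delta V_{(k)}^o$ with the update laws \eqref{30}--\eqref{32}. With the stated choice of $W_f^o$ the cross terms between $\tilde x_{f(k)}^o$ and $\tilde A_{f0(k)}^{vec}$ cancel, and Young's inequality together with $\bar L_f^o<\bm I_{nN}$ and the bound above leave
\begin{align*}
\Delta V_{(k)}^o&<-(\tilde x_{f(k)}^o)^T\big(W_f^o-S_f^{o^T}W_f^oS_f^o-2\Lambda_f^o\Gamma_f^o(\mathcal L_1\otimes\bm I_n)^T(\mathcal L_1\otimes\bm I_n)\big)\tilde x_{f(k)}^o\\
&\quad-(\tilde A_{f0(k)}^{vec})^T\zeta_{f(k)}^o\big(W_f^o-2\Lambda_f^o\Gamma_f^o(\mathcal L_1\otimes\bm I_n)^T(\mathcal L_1\otimes\bm I_n)\big)\zeta_{f(k)}^{o^T}\tilde A_{f0(k)}^{vec}.
\end{align*}
Because $S_f^o$ can be made Schur (Lemma 3 of \cite{KIUMARSI201786}), $W_f^o-S_f^{o^T}W_f^oS_f^o$ is positive definite, and both bracketed matrices are positive semidefinite exactly under the stated bound on $\Lambda_f^o$; hence $\Delta V_{(k)}^o<0$ and $\tilde x_{i(k)}^o\to\bm 0$ for every $i\in\mathbb F$.

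For the agents' estimates of the formation state I would repeat the identical computation for \eqref{34}--\eqref{36} with $V_{(k)}^q=(\tilde h_{f(k)}^q)^TW_f^q\tilde h_{f(k)}^q+(\tilde A_{fq(k)}^{vec})^T(\Lambda_f^q)^{-1}\tilde A_{fq(k)}^{vec}$, but one extra point must be settled: the subnetwork of agents influenced by leader $q$, and therefore $\mathcal L^q$ and $\mathcal H^q$, is fixed only once the filtering coefficients $\bar g_{j(k)}^q,\bar a_{jq(k)}$ inside $\eta_{m(k)}^q$ have stabilized, which by Proposition \ref{Proposition 1} occurs after at most $N+M-1$ steps; the finite transient cannot cause divergence since $h_{q(k)}^l$ remains bounded under Assumption \ref{Assumption 4}, so the Lyapunov analysis may start from that step. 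From then on, the pruning in $\eta_{m(k)}^q$ retains exactly a subgraph in which leader $q$ pins a set of nodes all reachable from $q$, which makes the relevant grounded Laplacian nonsingular and $S_f^q$ Schur (again Lemma 3 of \cite{KIUMARSI201786}); the same steps then give $\Delta V_{(k)}^q<0$ under the stated bound on $\Lambda_f^q$, hence $\tilde h_{m(k)}^q\to\bm 0$ for every agent $m$ in the subnetwork, and in particular $\tilde h_{i(k)}^q\to\bm 0$ for all $q\in\mathcal N_i^F$, $i\in\mathbb F$.

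The Lyapunov expansion itself is routine, being structurally identical to Theorem \ref{Theorem 2}. The delicate points are (i) the cascade reasoning: the already-converged upstream quantities enter each follower error system only as vanishing perturbations ($\hat x_{l(k)}^o$ in the first observer, the relayed ITFL estimates $\hat h_{j(k)}^q$ in the second), and one must argue that the strict decrease of $V$ is robust enough --- in an input-to-state sense --- that these perturbations do not spoil the asymptotic convergence; and (ii) showing that the pruning carried out by $\bar g_{j(k)}^q,\bar a_{jq(k)}$ indeed leaves, for all large $k$, a network in which leader $q$ grounds a reachable node set, so that $S_f^q$ inherits the Schur property needed for $W_f^q-S_f^{q^T}W_f^qS_f^q$ to be positive definite.
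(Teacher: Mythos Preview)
Your proposal is correct and is precisely the approach the paper intends: the paper's own proof of Theorem \ref{Theorem 3} consists of the single sentence ``The proof is similar to Theorem \ref{Theorem 2} and omitted.'' You are therefore not only matching the paper's method but in fact supplying more detail than the authors do, particularly regarding the cascade/vanishing-perturbation issue and the eventual stabilization of $\bar g_{j(k)}^q,\bar a_{jq(k)}$ after $N+M-1$ steps, neither of which the paper addresses explicitly.
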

\begin{proof}
The proof is similar to Theorem \ref{Theorem 2} and omitted.
\end{proof} 
\begin{remark}
In contrast to prior works \cite{HAGHSHENAS2015210,8277155,8869852,9627528,9802518,KIUMARSI201786}, which do not account for the presence of ITFLs in the topology graph, this paper introduces the concept of ITFL and proposes an observer that effectively solves the problem of underutilizing leader information. Furthermore, this data-based observer eliminates the need for a single leader or homogeneous leaders, distinguishing it from the observer proposed in \cite{KIUMARSI201786}.	
\end{remark}
\begin{remark}
Please note that the upper bounds of the coupling gains, denoted as ${\Lambda}_l,{\Lambda}_f^o,{\Lambda}_f^q$ in Theorems 2 and 3, are positively correlated with $\xi$. To ensure the observers' convergence, values close to $0$ can be selected for these coupling gains, avoiding the need to calculate the exact upper bound.
\end{remark}
\subsection{Online Data-Driven Learning Method for PFCC Problem}
In this subsection, we present a learning algorithm for solving the PFCC problem using solely measurement data, eliminating the need for knowledge about the system dynamics. We reformulate the Riccati equations \eqref{21} and \eqref{22} as  data-based  linear matrix equations, enabling the estimation of unknown matrices and models using least squares. The controller's learning process is implemented through value iteration, and Theorem \ref{Theorem 4} verifies the effectiveness of the data-driven algorithm.

Based on \eqref{1}, \eqref{2}, \eqref{21} and \eqref{22}, one gets
\begin{numcases}{}
	\hat X_{q(k)}^{l^T}{P_q}\hat X_{q(k)}^l =\hat X_{q(k)}^{l^T} C_l^T{Q_q}{C_l}\hat X_{q(k)}^{l} \nonumber\\
	~~~~~~~~~~~~~~~~~~~~~~~~+\hat X_{q(k+1)}^{l^T}{P_q}\hat X_{q(k+1)}^{l},~~~~~ q\in\mathbb{L} \label{40}\\ [2mm]
	\hat X_{i(k)}^{T}{P_{i(k)}}\hat X_{i(k)} =\hat X_{i(k)}^T C_{i(k)}^T{Q_i}{C_{i(k)}}\hat X_{i(k)} \nonumber\\
	~~~~~~~~~~~~~~~~~~~~~~~~~+  \hat X_{i(k+1)}^T{P_{i(k)}}\hat X_{i(k+1)}, ~~ i\in\mathbb{F}.
\end{numcases}
and 
\begin{numcases}{}
	\hat X_{q(k+1)}^{l^T}{P_q}\hat X_{q(k+1)}^l =\hat X_{q(k)}^{l^T} \bar A_q^T{P_q}{\bar A_q}\hat X_{q(k)}^{l} \nonumber\\
	+2u_{q(k)}^{l^T} \bar B_q^T{P_q}{\bar A_q}\hat X_{q(k)}^{l}+u_{q(k)}^{l^T} \bar B_q^T{P_q}{\bar B_q}u_{q(k)}^{l},~ q\in\mathbb{L} \label{42}\\ [2mm]
	\hat X_{i(k+1)}^{T}{P_{i(k)}}\hat X_{i(k+1)} =\hat X_{i(k)}^T \bar A_{i(k)}^T{P_{i(k)}}{\bar A_{i(k)}}\hat X_{i(k)} \nonumber\\
	~~~~~~~~~~~~~~~~~~~~~~+2u_{i(k)}^{T} \bar B_{i(k)}^T{P_{i(k)}}{\bar A_{i(k)}}\hat X_{i(k)}\nonumber\\
	~~~~~~~~~~~~~~~~~~~~~~+u_{i(k)}^{T} \bar B_{i(k)}^T{P_{i(k)}}{\bar B_{i(k)}}u_{i(k)}, ~~ i\in\mathbb{F}.
\end{numcases}

Define the following operators:
\begin{numcases}{}
	{\psi_{q(k)}^l}=\text{vecv}{(\hat X_{q(k)}^l)}\nonumber\\
	\tau _{q(k)}^l=\hat X_{q(k)}^l \otimes u_{q(k)}^{l}\nonumber\\
	\omega_{q(k)}^l=\text{vecv}{(u_{q(k)}^l)}\nonumber\\
	\Psi _{q}^l = {\left[ {\begin{array}{*{20}{c}}
				{\psi _{q(k)}^l}&{\psi _{q(k - 1)}^l}& \cdots &{\psi _{q(k - s - 1)}^l}
		\end{array}} \right]^T}\nonumber\\
	\Psi _{q+}^l = {\left[ {\begin{array}{*{20}{c}}
				{\psi _{q(k+1)}^l}&{\psi _{q(k)}^l}& \cdots &{\psi _{q(k - s)}^l}
		\end{array}} \right]^T}\nonumber\\
	{\rm T}_q^l = {\left[ {\begin{array}{*{20}{c}}
				{\tau _{q(k)}^l}&{\tau _{q(k - 1)}^l}& \cdots &{\tau _{q(k - s - 1)}^l}
		\end{array}} \right]^T}\nonumber\\
	\Omega _q^l = {\left[ {\begin{array}{*{20}{c}}
				{\omega _{q(k)}^l}&{\omega _{q(k - 1)}^l}& \cdots &{\omega _{q(k - s - 1)}^l}
		\end{array}} \right]^T}\nonumber\\
	\Theta _q^l = \left[ {\begin{array}{*{20}{c}}
			{\Psi _{q}^l}&{2{\rm T}_q^l}&{\Omega _q^l}
	\end{array}} \right]\nonumber,
\end{numcases}
where $s$ represents the dimension of the collected data. Using the same method to define ${\psi_{i(k)}}$, $\tau _{i(k)}$, $\omega_{i(k)}$, $\Psi _{i}$, $\Psi _{i+}$, ${\rm T}_i$, $\Omega _i$ and $\Theta_i$. Then, according to the property of Kronecker product, \eqref{40} and \eqref{42} can be transformed into
\begin{numcases}{}
	{\psi_{q(k)}^{l^T}} \text{vecm}({P_q})=\psi_{q(k)}^{l^T}  \text{vecm}(C_l^T{Q_q}{C_l})\nonumber\\
	~~~~~~~~~~~~~~~~~~~~~~~~+\psi_{q(k+1)}^{l^T}  \text{vecm}(P_q), \label{44}\\ [2mm]
	\psi_{q(k+1)}^{l^T} \text{vecm}({P_q})=\psi_{q(k)}^{l^T}  \text{vecm}(\bar A_q^T{P_q}{\bar A_q})\nonumber\\
	~~~+2\tau _{q(k)}^{l^T}\text{vec}(\bar B_q^T{P_q}{\bar A_q}) +\omega_{q(k)}^{l^T}  \text{vecm}(\bar B_q^T{P_q}{\bar B_q}).\label{45}
\end{numcases}

Using the collected data,  \eqref{44} and \eqref{45} can be sorted into the following linear matrix equations
\begin{numcases}{}
	\Psi _{q}^l{\text{vecm}}({P_q}) = \Phi _q^l, \\ [2mm]
	\Theta _q^l\Xi _q = \Psi _{q+}^l{\text{vecm}}({P_q}),
\end{numcases}
where 
\begin{numcases}{}
	\Phi _q^l = \Psi_{q}^{l}  \text{vecm}(C_l^T{Q_q}{C_l})+\Psi_{q+}^{l}  \text{vecm}(P_q), \nonumber \\ [2mm]
	\Xi _q = {\left[ {\begin{array}{*{20}{c}}
				{\text{vecm}^T}(\Xi _{q,1})&{\text{vec}^T}(\Xi _{q,2})&{\text{vecm}^T}(\Xi _{q,3})
		\end{array}} \right]^T}, \nonumber \\ [2mm]
	\Xi _{q,1} = \bar A_q^T{P_q}{{\bar A}_q}, ~\Xi _{q,2} = \bar B_q^T{P_q}{{\bar A}_q}, ~\Xi _{q,3} = \bar B_q^T{P_q}{{\bar B}_q}.\nonumber
\end{numcases}

Employing the same methodology to define $\Phi _i$, $\Xi _i$, $\Xi _{i,1}$, $\Xi _{i,2}$ and $\Xi _{i,3}$. We introduce Algorithm \ref{Algorithm 2} to derive a data-driven control solution for the PFCC problem.
\begin{algorithm}[htbp]	
	\renewcommand{\algorithmicrequire}{\textbf{Input:}}
	\caption{Online Data-Driven Learning Algorithm for PFCC Problem}\label{Algorithm 2}
	\begin{algorithmic}[1]
		\Require{$\hat K_q^0$, $\hat K_i^0$, $\hat P_q^0>0$, $\hat P_i^0>0$}
		\State Let $j = 0$.
		\State Update matrices $\hat P_q$ and $\hat P_i$:
		\begin{equation*}
			\left\{
			\begin{array}{ll}
				{\text{vecm}}(\hat P_q^{j + 1}) = {(\Psi _{q}^{{l^T}}\Psi _{q}^l)^{ - 1}}\Psi _{q}^{{l^T}}\Phi _q^l,\\[2mm]
				{\text{vecm}}(\hat P_i^{j + 1}) = {(\Psi _{i}^{{T}}\Psi _{i})^{ - 1}}\Psi _{i}^{{T}}\Phi _i.
			\end{array}\right.
		\end{equation*}
		\State Update model vectors $\Xi _q$ and $\Xi _i$:
		\begin{equation*}
			\left\{
			\begin{array}{ll}
				\Xi _q^j = {(\Theta _q^{{l^T}}\Theta _q^l)^{ - 1}}\Theta _q^{{l^T}}\Psi _{q + }^l{\text{vecm}}(\hat P_q^{j + 1}),\\[2mm]
				\Xi _i^j = {(\Theta _i^{{T}}\Theta _i)^{ - 1}}\Theta _i^{{T}}\Psi _{i + }{\text{vecm}}(\hat P_i^{j + 1}).
			\end{array}\right.
		\end{equation*}
		\State  Update control gains $\hat K_q$ and $\hat K_i$:
		\begin{equation*} 
			\left\{
			\begin{array}{ll}
				\hat K_q^{j + 1} ={(\Xi _{q,3}^j)^ + }\Xi _{q,2}^j,\\[2mm]
				\hat K_i^{j + 1} = {(\Xi _{i,3}^j)^ + }\Xi _{i,2}^j.
			\end{array}\right.
		\end{equation*}
		\State Let $j=j+1$ and return to step 2.
	\end{algorithmic}
\end{algorithm}
\begin{theorem} \label{Theorem 4}
	Consider the heterogeneous MAS \eqref{1}-\eqref{4} with Assumptions \ref{Assumption 1}-\ref{Assumption 6} hold. By employing Algorithm \ref{Algorithm 2}, we have
	\begin{equation}
		\mathop {\lim }\limits_{j \to \infty } \hat K_i^j = K_{i(\infty )}^F, \mathop {\lim }\limits_{j \to \infty } \hat K_q^j = K_q^L
	\end{equation}
	and
	\begin{equation}
		\mathop {\lim }\limits_{j \to \infty } \hat P_i^j = P_{i(\infty )}, \mathop {\lim }\limits_{j \to \infty } \hat P_q^j = P_q.
	\end{equation}
\end{theorem}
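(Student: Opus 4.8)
The plan is to recognize Algorithm~\ref{Algorithm 2} as a \emph{value iteration} scheme for the Riccati equations \eqref{21}--\eqref{22}, executed purely in data space, and to show that the data-based updates in Steps~2--4 reproduce exactly the model-based value-iteration recursion. The argument proceeds in three stages. First, I would establish an \emph{equivalence lemma}: provided the data matrices $\Psi_q^l,\Theta_q^l$ (resp.\ $\Psi_i,\Theta_i$) have full column rank, the least-squares solutions in Steps~2 and~3 satisfy $\hat P_q^{j+1}$, $\Xi_q^j$ exactly equal to the quantities defined by the model-based relations \eqref{44}--\eqref{45} with $P_q$ replaced by $\hat P_q^{j}$ on the right-hand side through $\Phi_q^l$, and $\Xi_{q,1}^j=\bar A_q^T\hat P_q^{j+1}\bar A_q$, $\Xi_{q,2}^j=\bar B_q^T\hat P_q^{j+1}\bar A_q$, $\Xi_{q,3}^j=\bar B_q^T\hat P_q^{j+1}\bar B_q$. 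This is where the rank/persistent-excitation condition on the collected data $s$ enters; I expect this to be the main technical obstacle, since one must argue that the $\mathrm{vecv}/\mathrm{vecm}$-lifted regressors span the relevant subspace and that the Kronecker-product identities used to pass from \eqref{40},\eqref{42} to \eqref{44},\eqref{45} are invertible on that subspace.

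Second, with the equivalence lemma in hand, I would unwind the composition of Steps~2--4 to obtain the closed recursion, for each leader $q\in\mathbb{L}$,
\begin{equation*}
\hat P_q^{j+1} = C_l^T Q_q C_l + \bar A_q^T \hat P_q^{j}\bar A_q - \bar A_q^T\hat P_q^{j}\bar B_q\bigl(\bar B_q^T\hat P_q^{j}\bar B_q\bigr)^{+}\bar B_q^T\hat P_q^{j}\bar A_q,
\end{equation*}
and analogously $\hat K_q^{j+1} = -(\bar B_q^T\hat P_q^{j}\bar B_q)^{+}\bar B_q^T\hat P_q^{j}\bar A_q$, with the obvious counterparts for each follower $i\in\mathbb{F}$ using $\bar A_{i(\infty)}$, $\bar B_{i(\infty)}$, $C_{i(\infty)}$ once Algorithm~\ref{Algorithm 1} has converged (Proposition~\ref{Proposition 1}), so that $C_{i(k)}$ is replaced by its limit $C_{i(\infty)}$ after at most $N+M-1$ steps. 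This is precisely the pseudo-inverse value-iteration map whose fixed point is the stabilizing solution of \eqref{21} (resp.\ \eqref{22}); note the term $-\bar A^T\hat P\bar B(\bar B^T\hat P\bar B)^+\bar B^T\hat P\bar A$ coincides with $(\bar A+\bar B K)^T\hat P(\bar A+\bar B K)-\bar A^T\hat P\bar A$ at the minimizing $K$, by the same pseudo-inverse manipulation used in Lemma~\ref{Lemma 4}.

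Third, I would invoke the standard convergence theory for value iteration of the (possibly singular-input) discrete-time LQR problem: since $(\bar A_q,\bar B_q)$ is stabilizable by Assumption~\ref{Assumption 3} (lifting the stabilizability of $(A_q,B_q)$, because the extra blocks $S_q,A_0$ are uncontrolled but have eigenvalues on or inside the unit circle by Assumption~\ref{Assumption 4}, so the detectability of $(C_l,\bar A_q)$ holds as the uncontrollable modes are marginally stable and observed through $C_l$), the iteration started from any $\hat P_q^0>0$ converges monotonically to the unique positive semidefinite stabilizing solution $P_q$ of \eqref{21}, and hence $\hat K_q^j\to K_q^L$; the same applies to the follower iterations with limit $P_{i(\infty)}$ and $K_{i(\infty)}^F$. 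The delicate point I would flag here is handling the non-invertibility of $\bar B_q^T P_q\bar B_q$ in the over-actuated case: one must check that the pseudo-inverse value-iteration map is still well-defined and contractive in the appropriate sense, which follows because $\mathrm{rank}(\bar B_q^T\hat P_q^j\bar B_q)$ stabilizes along the iteration and the map restricted to the relevant range is the classical one; the connection to the minimum-norm regulator solution is then exactly Lemma~\ref{Lemma 3}--Lemma~\ref{Lemma 4}. Combining the three stages yields both limits claimed in the theorem. $\blacksquare$
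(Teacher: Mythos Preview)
Your proposal is correct and takes essentially the same approach as the paper: show that Steps~2--4 of Algorithm~\ref{Algorithm 2} are, once Algorithm~\ref{Algorithm 1} and the observers have converged, equivalent to the model-based value-iteration recursion $P_m^{j+1}=C_m^TQ_mC_m+(\bar A_m+\bar B_mK_m^j)^TP_m^j(\bar A_m+\bar B_mK_m^j)$, $K_m^{j+1}=-(\bar B_m^TP_m^{j+1}\bar B_m)^{+}\bar B_m^TP_m^{j+1}\bar A_m$, and then invoke the convergence of value iteration (the paper simply cites Proposition~4.4.1 of Bertsekas). Your write-up is in fact more careful than the paper's brief argument in flagging the PE/rank hypothesis on $\Psi_q^l,\Theta_q^l$, the stabilizability/detectability of the augmented pair under Assumptions~\ref{Assumption 3}--\ref{Assumption 4}, and the pseudo-inverse subtlety in the over-actuated case.
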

\begin{proof}
	After $N+M-1$ steps and the convergence of the observers, steps $2-4$ of Algorithm \ref{Algorithm 2} become equivalent to the following iterative steps for agent $m$, $\forall m \in \mathbb{F}\cup\mathbb{L}$: 
	\begin{numcases}{}
		P_m^{j + 1} = C_m^T{Q_m}{C_m} + {({{\bar A}_m} + {{\bar B}_m}K_m^j)^T}P_m^j({{\bar A}_m} + {{\bar B}_m}K_m^j),\nonumber\\[2mm]
		K_m^{j + 1} =  - {(\bar B_m^TP_m^{j + 1}{{\bar B}_m})^ + }\bar B_m^TP_m^{j + 1}{{\bar A}_m},\nonumber
	\end{numcases}
	where ${C_m} = {C_l}$ for $ m \in \mathbb{L}$ and ${C_m} = [{\bm{I}_n}, - \alpha _{i}^{{\phi _{i(\infty )}}(1)}{\bm{I}_n},..., - \alpha _{i}^{{\phi _{i(\infty )}}({{\mathcal I}_{i(\infty )}})}{\bm{I}_n},-{\bm{I}_n}]$ for $ m \in \mathbb{F}$, ${\bar A}_m$ and ${\bar B}_m$ are given by Definition 8. Noted that after $N+M-1$ steps,  ${\bar A}_{i(k)}$ and ${\bar B}_{i(k)}$ of followers remain unchanged, so the subscript $(k)$ is omitted from the above equation. According to Proposition 4.4.1 of \cite{bertsekas2012dynamic}, if $\hat P_q^0$ and $\hat P_i^0$ are positively definite, the above iterative process will converge to the expected value. Hence, the proof is completed. 
\end{proof} 
\begin{remark}
	The data acquisition phase in Algorithm \ref{Algorithm 2} involves introducing noise to meet the persistent excitation (PE) requirement. In this paper, we incorporate normally distributed noise into the control input to ensure data richness during the learning process. The noise input is canceled, and the control process proceeds with the last updated control gain when the norm of the difference between two adjacent control gains falls below a specified threshold. In addition, during the data collection process, the systems employ the previously updated control strategy (with  perturbations added). From these two perspectives, the behavioral policy and the target policy are distinct, thus our approach is off-policy.
\end{remark}

\section{Simulation results}\label{S4}
In this section, we provide an example to validate the accuracy and effectiveness of the proposed data-driven approach for the PFCC problem. We examine a heterogeneous MAS comprising four followers designated as F1, F2, F3, F4, and six formation leaders labeled as L1, L2, L3, L4, L5, and L6. The network topology is visually represented in Fig. \ref{fig2}. In Fig. \ref{fig2}, the sets of IFLs for F1, F2, F3 and F4 are \{L1, L2\}, \{L1, L2, L3, L4\}, \{L1, L3\} and \{L1, L2, L3, L4, L5, L6\}, respectively. Since the information from L1 and L2 cannot be transmitted to F3 and F4 through the followers alone, \{L3, L5\} and \{L4, L6\} serve as the ITFLs for L1 and L2, respectively. Consequently, if observers are not designed for L1 and L2, F3 will be unable to utilize the information from L1, and F4 will be unable to utilize the information from both L1 and L2. The system matrices for the leaders and formation shapes are provided as follows:

\begin{align}
	&{A_{L1}} = \left[ {\begin{array}{*{20}{c}}
			0&1\\
			-2&{4}
	\end{array}} \right],\ \ {B_{L1}} = \left[ {\begin{array}{*{20}{c}}
			{ 0}\\
			{ 2}
	\end{array}} \right],\ \ {S_{L1}} = \left[ {\begin{array}{*{20}{c}}
			0&1\\
			1&{0}
	\end{array}} \right]\nonumber \\
	&{A_{L2}} = \left[ {\begin{array}{*{20}{c}}
			0&1\\
			-1&{1}
	\end{array}} \right],\ \ {B_{L2}} = \left[ {\begin{array}{*{20}{c}}
			0&2\\
			2&{-1}
	\end{array}} \right],\ \ {S_{L2}} = \left[ {\begin{array}{*{20}{c}}
			0&1\\
			1&{0}
	\end{array}} \right]\nonumber \\
	&{A_{L3}} = \left[ {\begin{array}{*{20}{c}}
			0&1\\
			1&{0}
	\end{array}} \right],\ \ {B_{L3}} = \left[ {\begin{array}{*{20}{c}}
			0&2&1\\
			1&{1}&1
	\end{array}} \right],\ \ {S_{L3}} = \left[ {\begin{array}{*{20}{c}}
			0&1\\
			1&{0}
	\end{array}} \right]\nonumber \\
	&{A_{L4}} = \left[ {\begin{array}{*{20}{c}}
			0&1\\
			3&{-2}
	\end{array}} \right],\ \ {B_{L4}} = \left[ {\begin{array}{*{1}{c}}
			{0}\\
			{3}
	\end{array}} \right],\ \ {S_{L4}} = \left[ {\begin{array}{*{20}{c}}
			0&1\\
			1&{0}
	\end{array}} \right]\nonumber \\
	&{A_{L5}} = \left[ {\begin{array}{*{20}{c}}
			0&1\\
			-1&{0}
	\end{array}} \right],\ \ {B_{L5}} = \left[ {\begin{array}{*{20}{c}}
			1&3\\
			2&{0}
	\end{array}} \right],\ \ {S_{L5}} = \left[ {\begin{array}{*{20}{c}}
			0&1\\
			1&{0}
	\end{array}} \right]\nonumber \\
	&{A_{L6}} = \left[ {\begin{array}{*{20}{c}}
			0&1\\
			1&{1}
	\end{array}} \right],\ \ {B_{L6}} = \left[ {\begin{array}{*{20}{c}}
			{ 0}\\
			{ -2}
	\end{array}} \right],\ \ {S_{L6}} = \left[ {\begin{array}{*{20}{c}}
			0&1\\
			1&{0}
	\end{array}} \right].&\nonumber 
\end{align}

The dynamics of followers are given as
\begin{align}
	&{A_{F1}} = \left[ {\begin{array}{*{20}{c}}
			0&1\\
			1&{3}
	\end{array}} \right],\ \ {B_{F1}} = \left[ {\begin{array}{*{20}{c}}
			{ 0}\\
			{ 1}
	\end{array}} \right]\nonumber \\
	&{A_{F2}} = \left[ {\begin{array}{*{20}{c}}
			0&1\\
			2&{1}
	\end{array}} \right],\ \ {B_{F2}} = \left[ {\begin{array}{*{20}{c}}
			{ 0}\\
			{ 2}
	\end{array}} \right]\nonumber \\
	&{A_{F3}} = \left[ {\begin{array}{*{20}{c}}
			0&1\\
			1&{-2}
	\end{array}} \right],\ \ {B_{F3}} = \left[ {\begin{array}{*{20}{c}}
			{ 0}\\
			{-1}
	\end{array}} \right]\nonumber \\
	&{A_{F4}} = \left[ {\begin{array}{*{20}{c}}
			0&1\\
			-1&{2}
	\end{array}} \right],\ \ {B_{F4}} = \left[ {\begin{array}{*{20}{c}}
			{ 0}\\
			{ - 2}
	\end{array}} \right].&\nonumber 
\end{align}

The dynamics of the tracking leader are given as
\begin{align}
	&{A_{0}} = \left[ {\begin{array}{*{20}{c}}
			0&1\\
			1&{0}
	\end{array}} \right].&\nonumber 
\end{align}
\begin{figure}
	\centering 
	\includegraphics[scale=1]{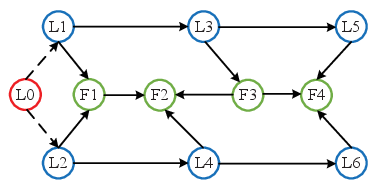}
	\caption {{ Network topology composed of all agents.}}
	\label{fig2}
\end{figure}

In this example, the leaders maintain an identical propensity throughout the first 4000 iterations, with the propensity factors set as ${{\vartheta _1}}=0.1$, ${{\vartheta _2}}=0.1$, ${{\vartheta _3}}=0.1$, ${{\vartheta _4}}=0.1$, ${{\vartheta _5}}=0.1$, ${{\vartheta _6}}=0.1$. After 4000 iterations, the followers are anticipated to prefer L1 more significantly, leading to an adjustment in the propensity factors to ${{\vartheta _1}}=0.5$, ${{\vartheta _2}}=0.1$, ${{\vartheta _3}}=0.1$, ${{\vartheta _4}}=0.1$, ${{\vartheta _5}}=0.1$, ${{\vartheta _6}}=0.1$. Set $Q_{L1}=Q_{L2}=Q_{L3}=Q_{L4}=Q_{L5}=Q_{L6}=2\times I_2$, $Q_{F1}=Q_{F2}=Q_{F3}=Q_{F4}=I_2$. Select $\xi= 4$, $\Lambda_l=8$ in Theorem \ref{Theorem 2} and $\Lambda_f^o =8, \Lambda_f^q =8, \forall  q\in {\mathcal N}_{i}^F, i \in \mathbb{F}$ in Theorem \ref{Theorem 3}. The gains in \eqref{27}, \eqref{32} and \eqref{36} are chosen as $\mu_l =\mu_f^o=\mu_f^{L1}= 0.7$,  $\mu_f^{L2}=\mu_f^{L3}=\mu_f^{L4}=\mu_f^{L6}=\mu_f^{L6}=1$. The gain matrices in \eqref{27}, \eqref{32} and \eqref{36} are chosen as $F_l = F_f^o=F_f^{L3}=\left[ {\begin{array}{*{20}{c}}
		{0.1}&{0.5}\\
		{0.5}&{0.1}
\end{array}} \right], F_f^{L1}=F_f^{L2}=F_f^{L4}=F_f^{L5}=F_f^{L6}=\left[ {\begin{array}{*{20}{c}}
		{0.2}&{0.5}\\
		{0.5}&{0.2}
\end{array}} \right],\forall  q\in {\mathcal N}_{i}^F, i \in \mathbb{F}$. 

\begin{figure}[htbp]
	\centering
	\includegraphics[scale=0.4]{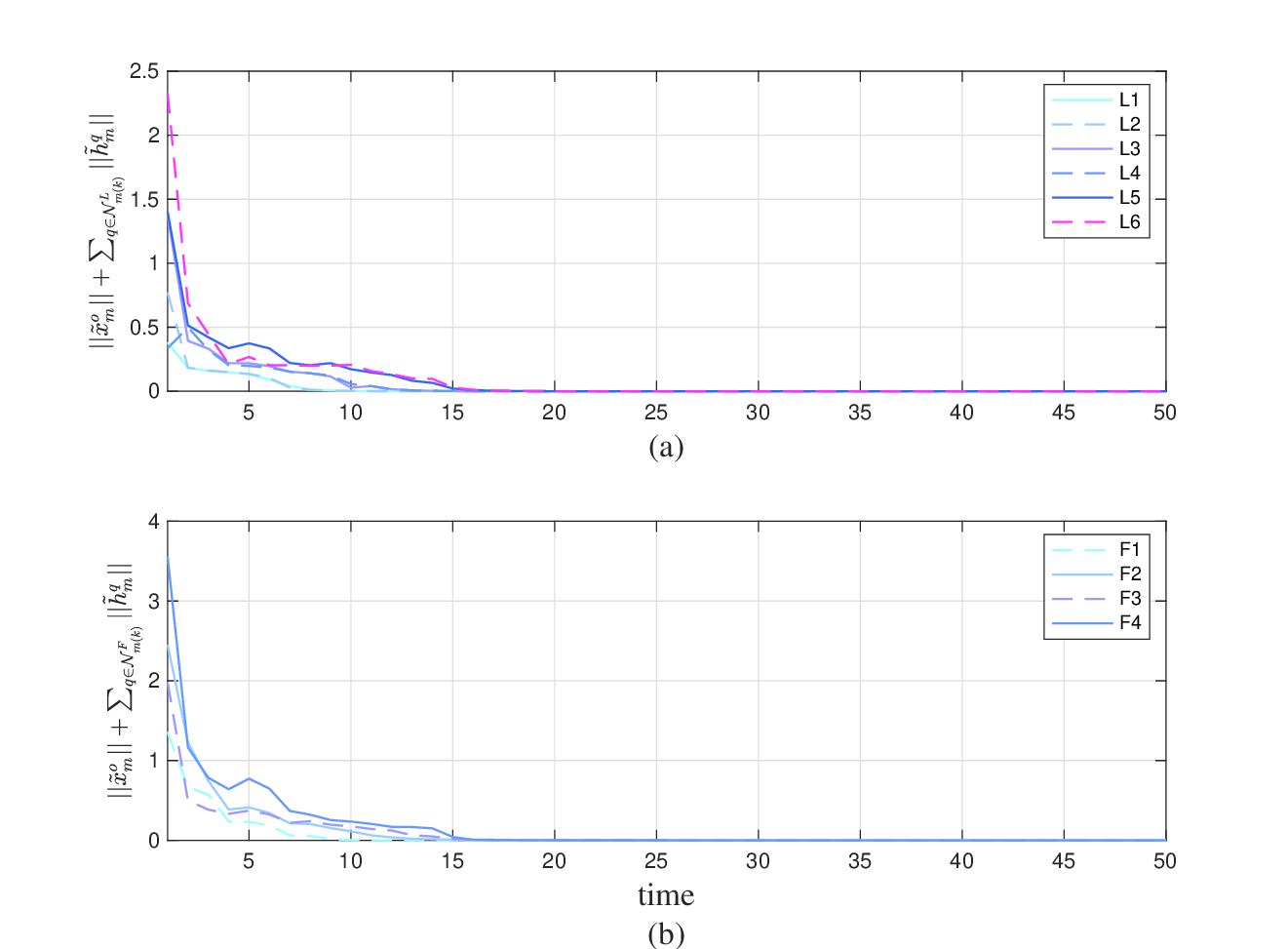}
	\caption {{Sum of the norm of all observer errors for agents. (a). Observer error $||\tilde x_m^o||+\sum\nolimits_{q \in {\mathcal N}_{m}^L} {||\tilde h_m^q||}$ for leaders. (b). Observer error $||\tilde x_m^o||+\sum\nolimits_{q \in {\mathcal N}_{m}^F} {||\tilde h_m^q||}$ for followers.}}
	\label{fig3}
\end{figure}

\begin{figure}[htbp]
	\centering
	\includegraphics[scale=0.4]{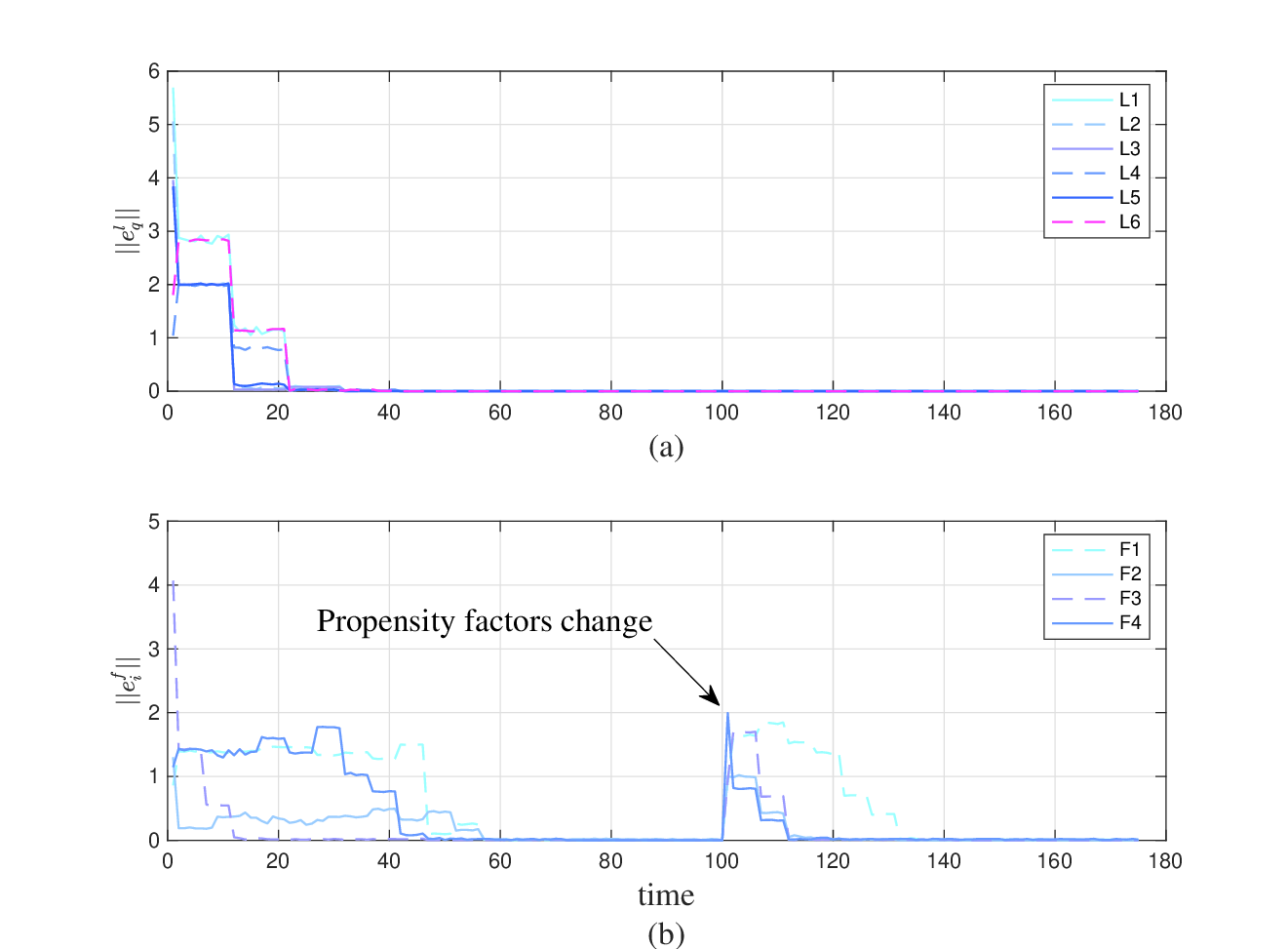}
	\caption {{Norm of errors for PFCC problem. (a). Formation error of leaders $||e_q^l||$in \eqref{58}. (b). Propensity containment error of followers $||e_i^f||$ in \eqref{60}.}}
	\label{fig4}
\end{figure}

The initial state of the tracking leader is set to $[0, 0]^T$, while the initial formation states of L1–L6 are set to $[2, 0]^T$, $[0, 2]^T$, $[0, -2]^T$, $[-2, 0]^T$, $[2, 2]^T$, and $[-2, -2]^T$, respectively. Consequently, the desired formation of the leaders forms a symmetric hexagon. During the simulation, the data in Fig. \ref{fig3} to Fig. \ref{fig5} is sampled every 40 iterations, while the data in Fig. \ref{fig7} and Fig. \ref{fig8} is sampled every 100 iterations. All system matrices assumed to be unknown. Furthermore, to emphasize the advantages of the PFCC method proposed in this paper, the proposed data-driven approach is applied to the traditional two-layer FCC problem \cite{9870036} for comparison. 

Fig. \ref{fig3} shows the sum of the norms of all observer errors for each agent, with Fig. \ref{fig3}(a) corresponding to the formation leaders and Fig. \ref{fig3}(b) to the followers. Fig. \ref{fig4} presents the error norms for the PFCC problem, where Fig. \ref{fig4}(a) pertains to the leaders and Fig. \ref{fig4}(b) to the followers. After 4000 iterations, the propensity factors adjust, enabling the followers to relearn the target controller. These figures demonstrate the successful resolution of the fully heterogeneous PFCC problem.                           

Fig. \ref{fig5} shows the observations made by agents on the formation matrices of leaders L1. In this figure, $\hat A_{m,q}^{ij}$ represents the observed values of agent $m$ for the $i$-th row and $j$-th column of the formation matrix of leader $q$. Our design considerations encompass not only the followers but also the transit formation leaders L3, L5, and L6. This approach ensures that F3 and F4 can effectively utilize the information provided by L1, and similar operations apply to L2, L3 and L4.

Fig. \ref{fig7} shows the 3-D state trajectories of the leaders and followers using the proposed method. Due to the involvement of L3 in the observation of L1, F3 can effectively utilize the information from L1. As a result, the convergence position of F3 is located at the midpoint between L1 and L3, rather than simply tracking L3. In a similar manner, F4 utilizes information from L1, L2, and L4, leading to its final convergence position at the center of all leaders. Fig. \ref{fig8} illustrates the 3-D state trajectory of the traditional FCC approach. In this method, the followers' positions are influenced by the Laplacian matrix, and the fixed topology prevents them from adapting to the leaders' tendencies. However, the PFCC method enables the adjustment of followers' positions after modifying the propensity factors.
\begin{figure}[htbp]
	\centering
	\includegraphics[scale=0.4]{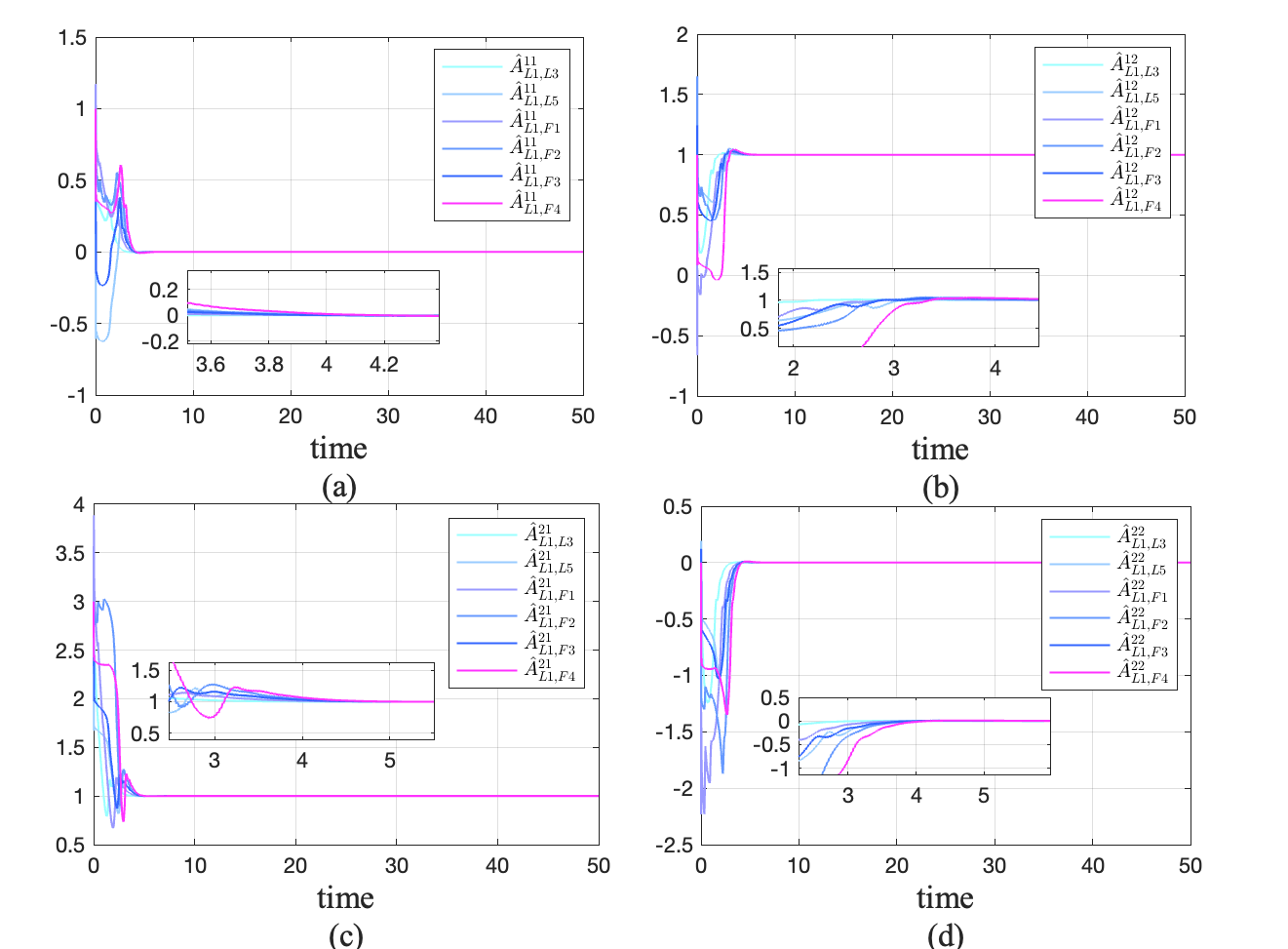}
	\caption {{Observations of agents on the formation matrix of leader L1. (a). Observation values of $1$-th row and $1$-th column. (b). Observation values of $1$-th row and $2$-th column. (c). Observation values of $2$-th row and $1$-th column. (d). Observation values of $2$-th row and $2$-th column.}}
	\label{fig5}
\end{figure}

\begin{figure}[htbp]
	\centering
	\includegraphics[scale=0.32]{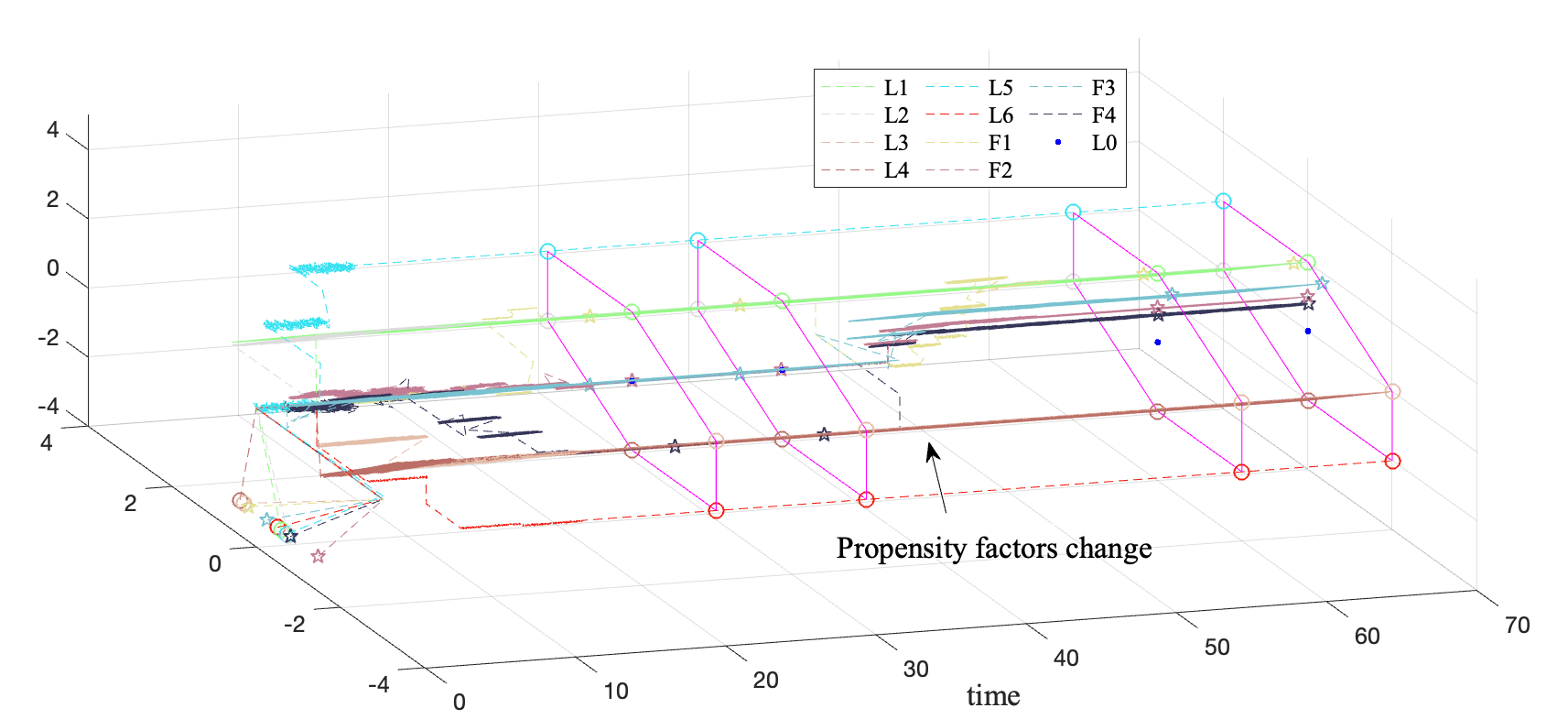}
	\caption {{3-D state trajectories of the followers and leaders for the proposed PFCC method.}}
	\label{fig7}
\end{figure}

\begin{figure}[htbp]
	\centering
	\includegraphics[scale=0.32]{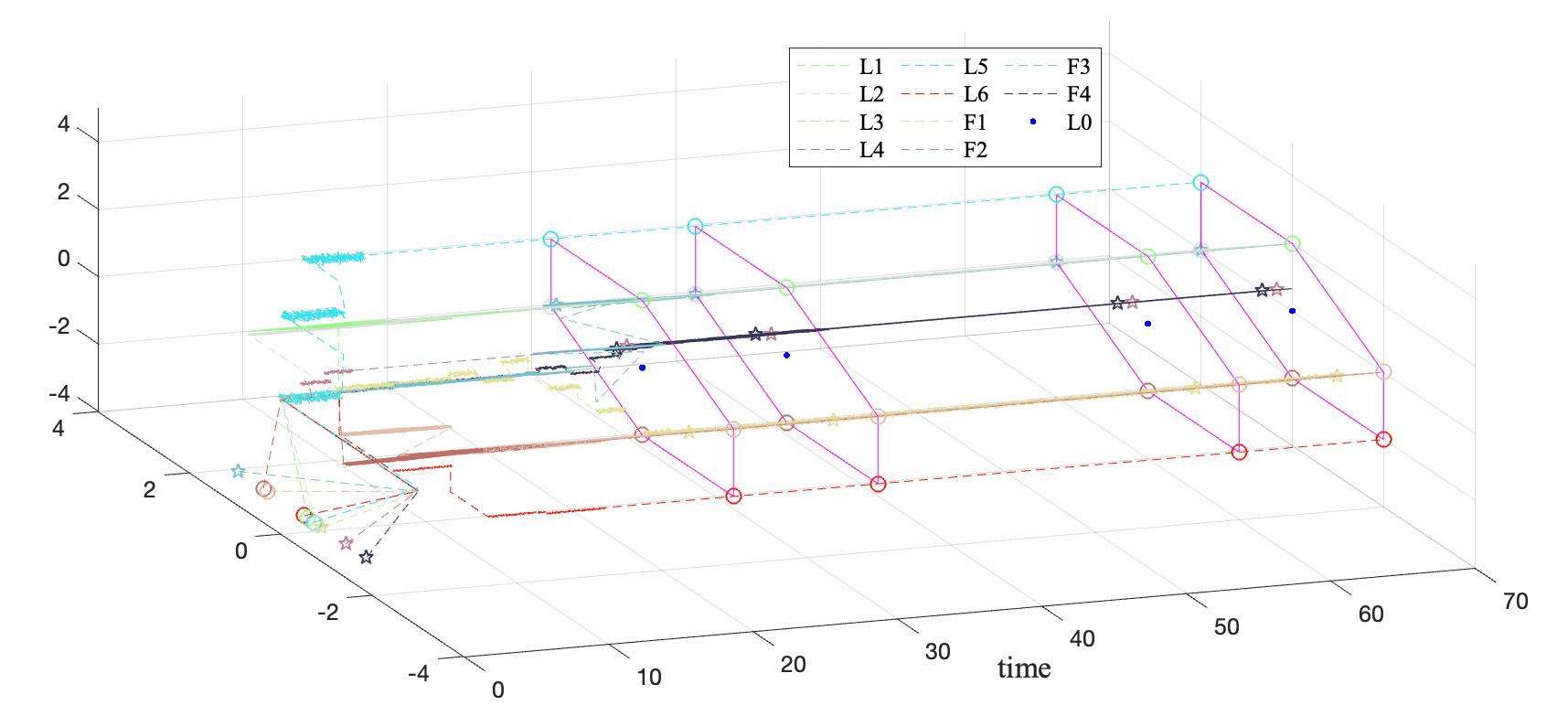}
	\caption {{3-D state trajectories of the followers and leaders for the traditional FCC method \cite{9870036}.}}
	\label{fig8}
\end{figure}
\section{Conclusions}\label{S5}
This paper addresses the Propensity Formation-Containment Control (PFCC) problem within the context of fully heterogeneous Multi-Agent Systems (MAS) that exhibit varied formation dynamics. To tackle this problem, we introduced a distributed algorithm for collecting essential control information. Additionally, we designed a data-driven adaptive observer to estimate the state of the tracking leader or the state of the leader's formation for all agents, including the Influential Transit Formation Leader (ITFL). Leveraging this acquired information and observations, we introduced a model-based control protocol that reveals the relationship between the regulation equations and control gains. This protocol guarantees the asymptotic convergence of the agents' expected values. Moreover, we developed a corresponding online data-driven learning algorithm for this control protocol, eliminating the need for any model information throughout the entire control process. The effectiveness of the proposed method was validated through a numerical example. In the future, it may be of interest to explore the relationship between the convergence rate of this method and its adjustable parameters.

\bibliography{refPFCC.bib} 
\bibliographystyle{IEEEtran} 
\end{document}